\newtheorem{theorem}{Theorem}[section]
\newtheorem{lemma}[theorem]{Lemma}
\newtheorem{claim}[theorem]{Claim}
\newcommand{\argmin}{\operatornamewithlimits{argmin}}
\title{Bounds on Double-Sided Myopic Algorithms for Unconstrained Non-monotone Submodular Maximization}
\author{
		Norman Huang \\
	University of Toronto \\
	\texttt{huangyum@cs.toronto.edu}
	\and
Allan Borodin \\
	University of Toronto \\
	\texttt{bor@cs.toronto.edu}
}
\begin{document}
\maketitle

\begin{abstract}
Unconstrained submodular maximization captures many NP-hard combinatorial optimization problems, including Max-Cut, Max-Di-Cut, and variants of facility location problems. Recently, Buchbinder \emph{et al.} \cite{Buchbinder} presented a surprisingly simple linear time randomized \emph{greedy-like} online algorithm that achieves a constant approximation ratio of $\frac{1}{2}$, matching optimally the hardness result of Feige \emph{et al.} \cite{Feige:2007:MNS:1333875.1334170}. Motivated by the algorithm of Buchbinder \emph{et al.}, we introduce a precise algorithmic model called \emph{double-sided myopic algorithms}. We show that while the algorithm of Buchbinder \emph{et al.} can be realized as a randomized online double-sided myopic algorithm, no such deterministic algorithm, even with adaptive ordering, can achieve the same approximation ratio. With respect to the
Max-Di-Cut problem, we relate the Buchbinder \emph{et al.} algorithm and our
myopic framework to the online algorithm and inapproximation of Bar-Noy and 
Lampis \cite{barnoy:lampis}.  

\end{abstract}

\section{Introduction}

Submodularity emerges in natural settings such as economics, algorithmic game theory, and operations research; many combinatorial optimization problems can be abstracted as the maximization/minimization of submodular functions. A canonical example of such is $f(S) =  \sum_{i \in S, j \notin S} w_{ij}$, the cut of a graph with edge weights $w_{ij}$, of the vertex set $S$. As with the special case of Min-Cut, the general problem of submodular minimization is solvable in polynomial time \cite{lovasz,cunningham1985submodular,Iwata:2001:CSP:502090.502096,Schrijver:2000:CAM:361537.361552}. 

Maximizing a submodular function, on the other hand, generalizes NP-hard problems such as Max-Cut \cite{Goemans:1995:IAA:227683.227684}, Max-Di-Cut \cite{Halperin01combinatorialapproximation,alon2007maximum,Goemans:1995:IAA:227683.227684}, Maximum-Coverage \cite{Hochbaum:1996:ACP:241938.241941,Feige:1998:TLN:285055.285059}, expected influence in a social network \cite{KempeKT03} and facility location problems \cite{bankaccount,Cornuejols1977163}. Appropriately, this problem tends to be approached under the context of approximation heuristics in the literature. For monotone submodular functions, maximization under a cardinality constraint can be achieved by the greedy algorithm with an approximation ratio of $(1-\frac{1}{e})$ \cite{nemhauser}, which is in fact optimal in the \emph{value oracle model} \cite{nemhauser1978best}. The same approximation ratio is obtainable for the more general matroid constraints \cite{calinescu07maximizinga,filmus2012tight}, as well as knapsack constraints \cite{sviridenko2004note,lee2009non}.

We limit our discussion to unconstrained non-monotone submodular maximization --- typical examples of which include Max-Cut and Max-Di-Cut in graphs and hypergraphs. Note that solving these problems as general submodular maximization may not yield the best approximation ratio. In fact, Goemans and Williamson \cite{Goemans:1995:IAA:227683.227684} used semidefinite programming to approximate Max-Cut within 0.878, and Max-Di-Cut within 0.796. The approximation for Max-Di-Cut was later improved to 0.859 by Feige and Goemans \cite{feige1995approximating} and to the currently best known ratio of 0.874 by Lewin, Livnat, and Zwick \cite{lewin2006improved}. Trevisan \cite{trevisan1998parallel} showed that $\frac{1}{2}$ approximation for Max-Di-Cut can also be achieved with linear programming. On the other hand, an approximation ratio of $\frac{1}{2}$ for the general unconstrained non-monotone submodular maximization problem is optimal unless exponentially many value oracle queries are made \cite{Feige:2007:MNS:1333875.1334170,vondrak2013symmetry}. In the former case, $f$ is known to have a succinct representation (i.e. $f$ is completely revealed once the algorithm queries every edge weight); while in general, an explicit representation of $f$ might be exponential in the size of the ground set. On the other hand, instances of Max-Cut and Max-Di-Cut are convenient in establishing lower bounds for the general submodular maximization problem.

Recently, linear time (linear in counting one step per oracle call) \emph{double-sided greedy} algorithms were developed by Buchbinder \emph{et al.} \cite{Buchbinder} for unconstrained non-monotone submodular maximization. The deterministic version of their algorithm, stated formally in Algorithm~\ref{alg1}, achieves an approximation ratio of $\frac{1}{3}$, while the randomized version achieves $\frac{1}{2}$ in expectation - improving upon the $\frac{2}{5}$ randomized local-search approach in \cite{Feige:2007:MNS:1333875.1334170}, and the $0.42$ simulated-annealing technique in \cite{gharan2011submodular,feldman2011nonmonotone}, in terms of approximation ratio, time complexity and arguably, algorithmic simplicity. While the hardness result of Feige \emph{et al.} \cite{Feige:2007:MNS:1333875.1334170} implies optimality of the randomized algorithm, the gap between the deterministic and randomized variants remains an open problem. More specifically, is there any de-randomization that would preserve both the greedy aspect of the algorithm as well as the approximation?

To address this question, we adapt the framework of priority algorithms of Borodin \emph{et al.} \cite{Borodin:2002:PA:545381.545481}, a model for \emph{greedy-like} algorithms that has since been used in studying the limits of certain optimization algorithms \cite{regev2002priority}, graph algorithms \cite{davis2004models}, randomized greedy algorithms \cite{angelopoulos2004randomized}, and dynamic programming \cite{alekhnovich2005toward}. The idea is to derive information-theoretic lower bounds for entire \emph{classes} of greedy-like algorithms using adversarial arguments similar to those employed in online competitive analysis. In our case, we define a \emph{double-sided myopic algorithms} framework, and show that no such algorithm in the deterministic setting can de-randomize the Buchbinder \emph{et al.} $\frac{1}{2}$-ratio double-sided greedy algorithm.

\subsection{Related Work}

Our motivation is the double-sided greedy algorithms of Buchbinder \emph{et al.}
that we wish to formalize. Independently, Bar-Noy and Lampis 
\cite{barnoy:lampis}
gave a $\frac{1}{3}$ deterministic online greedy algorithm for the Max-Di-Cut 
problem matching the deterministic approximation obtained by Buchbinder \emph{et al.}
for all unconstrained non-monotone submodular maximization problems. 
Interestingly, their algorithm is shown to be the 
de-randomization of the simple $\frac{1}{4}$ random-cut algorithm!  
Bar-Noy and Lampis give an improved $\frac{2}{3\sqrt{3}}$ approximation for Max-Di-Cut when restricted to DAGs. Furthermore, they provide a precise online 
model with respect to which this approximation bound is essentially optimal.

In addition to the naive random-cut algorithm (and its de-randomization), there are a number of combinatorial algorithms for the Max-Di-Cut problem. Alimonti \cite{alimonti1997non} gives a $\frac{2}{5}$ non-oblivious local search algorithm. Halperin and Zwick \cite{Halperin01combinatorialapproximation} also give a linear time randomized $\frac{9}{20}$-approximation algorithm, as well as a bipartite-matching based algorithm with $\frac{1}{2}$-approximation. Based on a novel LP formulation, a deterministic algorithm due to Datar \emph{et al.} \cite{datar2003combinatorial} achieves a $\frac{4}{9}$-approximation for Max-Di-Cut. On the other hand, little is known about the limits of combinatorial approaches to Max-Di-Cut. In addition to the online inapproximation by Bar-Noy and Lampis, Feige and Jozeph \cite{feige2010oblivious} give a randomized \emph{oblivious algorithm}  
\footnote{Feige \cite{feige2010oblivious} uses the term oblivious in a different sense than Alimonti \cite{alimonti1997non} who uses the term to indicate that in each iteration of the local-search, the algorithm uses an auxiliary function rather than the given objective function.} (where only the total in/out weights of a vertex is given) that achieves a $0.483$ ratio, but show that no oblivious algorithm can do better than $0.4899$. In Section~\ref{sec:discussion}, 
we discuss the relation of oblivious algorithms to our work. Finally, we note that beyond combinatorial algorithms, H{\aa}stad \cite{haastad2001some} gives a $\frac{11}{12}$ inapproximability bound for Max-Di-Cut under the assumption that $NP \neq P$; whereas for Max-Cut, Khot \emph{et al.} \cite{khot2007optimal} show that the ratio of $0.878$ is optimal under the Unique Games Conjecture.

Another relevant class of problems is submodular Max-Sat (in which the objective function is monotone submodular). Azar \emph{et al.} \cite{azar2011submodular} provide a randomized online algorithm that achieves $\frac{2}{3}$-approximation, which is tight under their data model. They also demonstrate via a simple reduction the equivalence between submodular Max-Sat and monotone submodular maximization subject to binary partition matroids constraint. In fact, by incorporating the bipartition constraint, Buchbinder \emph{et al.} show that their $\frac{1}{2}$ randomized double-sided greedy algorithm can be readily extended to a $\frac{3}{4}$-approximation algorithm for submodular Max-Sat. While the $\frac{3}{4}$ ratio was already achievable by randomized algorithms \cite{poloczek2011randomized,van2012simpler}, the double-sided algorithm generalizes to submodular Max-Sat, all the while entailing a much simpler analysis (see Poloczek \emph{et al.} \cite{poloczek2013some} for details). On the other hand, Poloczek \cite{poloczek2011bounds} shows that for Max-Sat, no deterministic adaptive priority algorithm 
(in a more general input model than in Azar \emph{et al.}) can achieve an approximation ratio of $0.729$. This rules out the possibility of de-randomizing the above $\frac{3}{4}$ algorithms using any greedy algorithm.\footnote{The inapproximation result of Poloczek also applies to submodular Max-Sat, as submodularity encompasses all modular (linear) functions.}
This stands in contrast to the fact that the naive randomized algorithm can 
be de-randomized  
(by the method of conditional expectations), and as shown by 
Yannakakis \cite{Yannakakis94} becomes   
Johnson's \cite{Johnson74} deterministic algorithm. Chen \emph{et al.} \cite{ChenFZ99} show that 
Johnson's algorithm is a $\frac{2}{3}$ approximation for Max-Sat. 


Independent of our work, Paul, Poloczek and Williamson \cite{PaulPW2014}
have very recently derived a number of deterministic algorithms, and deterministic and randomized inapproximations for Max-Di-Cut with respect to the priority algorithm framework.

\subsection{Basic Definitions}
A set function $f: 2^\mathcal{N} \to \mathbb{R}$ is \emph{submodular} if for any $S,T \subseteq \mathcal{N}$,

\begin{align*}
	f(S\cup T) + f(S \cap T) \leq f(S) + f(T).
\end{align*}

We say that $f$ is \emph{monotone} if $f(S) \leq f(T)$ for all $S\subseteq T \subseteq \mathcal{N}$, and \emph{non-monotone} otherwise. An equivalent, and perhaps more intuitive definition of submodular functions captures the principle of \emph{diminishing returns}: $f(S \cup \{x\}) - f(S) \geq f(T \cup \{x\}) - f(T)$, whenever $S \subseteq T$ and $x \in \mathcal{N} \setminus T$. 

In the unconstrained non-monotone submodular maximization problem, we are given 
a finite subset $\mathcal{N}$ and the goal is 
to find a subset $S \subseteq \mathcal{N}$ (where $\mathcal{N}$ is finite) so as to maximize $f(S)$ 
for a specified submodular function $f$.
In general, since the specification of $f$ requires knowing 
its value on all possible subsets, $f$ is accessed via a value oracle which 
given 
$X \subseteq \mathcal{N}$, returns $f(X)$. We state the deterministic version of the Buchbinder \emph{et al.} double greedy algorithm, which approximates this problem with $\frac{1}{3}$ guarantee, in Algorithm~\ref{alg1}. 

\begin{algorithm}[htpb]
 \begin{algorithmic}[1]
 \State $S_0 \gets \emptyset, T_0 \gets \mathcal{N}$
 \For{$i = 1 \text{ to } n$}
 \State $a_i\gets f(S_{i-1} \cup \{u_i\}) - f(S_{i-1})$
 \State $b_i\gets f(T_{i-1} \setminus \{u_i\}) - f(T_{i-1})$
 \If{$a_i \geq b_i$} \State $S_i \gets S_{i-1} \cup \{u_i\}, T_i \gets T_{i-1}$ 
 \Else \State $T_i \gets T_{i-1} \setminus \{u_i\}, S_i \gets S_{i-1}$ \EndIf
 \EndFor
 \State \textbf{return} $S_n$
 \end{algorithmic}
 \caption{DeterministicUSM($f,\mathcal{N}$)}
 \label{alg1}
\end{algorithm}

For some explicitly defined submodular functions, such as Max-Cut and Max-Di-Cut, 
we are given as input an edge weighted graph 
(or directed graph) $G = (V,E,w)$. 
Here we interpret the ground set $\mathcal{N}$ to be vertex set $V$, and $f(X) = \sum_{u \in X, v \in V \setminus X} w(u,v)$ to be the cut function, which can be assumed to
be computed at unit cost. But, of course, in such an explicitly defined
problem, if an algorithm is given the edge weights then it may deduce the complete mapping of $f$
without value oracle calls. For online computations, we usually
assume that the graph is revealed one vertex at a time, in the sense
that the revealed vertex specifies the edges and their weights to
all previously revealed vertices (while the number of vertices may be given a priori). As argued, by Bar-Noy and Lampis, 
for cut problems, the revealed vertex must also give global information about
each node, namely the total weight of in-edges and the 
total weight of out-edges. 

\subsection{Our Contribution}  

We introduce a formalization of \emph{double-sided myopic algorithms} - an adaptation of the priority framework under a \emph{restricted} value oracle model. To make this precise, we will introduce three types of \emph{relevant oracle queries}. We show that the double-sided greedy algorithms of Buchbinder \emph{et al.} can be realized as online double-sided myopic algorithms. Moreover, our framework
also captures the online algorithm of Bar-Noy and Lampis for Max-Di-Cut. 
Even our most
restrictive model (see query Q-Type 1 in Section~\ref{queryclasses}) 
allows for plausible ways to extend these algorithms. As our main contribution, we establish the following lower bounds for deterministic algorithms under this framework with respect to the stronger Q-Types as defined in Section~\ref{queryclasses}.

\begin{theorem}
\label{onlinetheorem} 
No deterministic online double-sided myopic algorithm (with respect to our
strongest oracle model Q-Type 3) can achieve a competitive ratio of $\frac{2}{3\sqrt{3}} + \epsilon \approx 0.385 + \epsilon$ for any $\epsilon > 0$ for the unconstrained non-monotone submodular maximization problem.
\end{theorem}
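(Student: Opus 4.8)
The plan is to prove the bound on the special case of Max-Di-Cut, since the directed cut function is submodular and hence any inapproximation for Max-Di-Cut within the myopic model immediately transfers to general unconstrained non-monotone submodular maximization. Because the theorem concerns \emph{deterministic} online algorithms, a single adaptive adversary suffices and there is no need to invoke Yao's principle. First I would set up the adversary so that it reveals the vertices of a directed graph one at a time, presenting to the algorithm only the information permitted by the Q-Type 3 oracle (the weights of edges to previously revealed vertices together with the total in-weight and out-weight of the current vertex), observes the algorithm's irrevocable decision to place the vertex in $S$ or in $\mathcal{N} \setminus S$, and then extends the graph so as to drive down the ratio between the algorithm's cut and the optimal cut.

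The heart of the argument is a parametrized adversarial construction. I would fix a scaling parameter $x \in (0,1)$ governing the relative weights of two classes of edges and build a family of directed graphs that are pairwise indistinguishable under the Q-Type 3 oracle up to the moment the algorithm commits. The construction must guarantee that, whatever binary decision the deterministic algorithm makes on a revealed vertex, its local view -- edges to already-revealed vertices plus the aggregate in/out weights -- is identical across at least two continuations; the adversary is then free to commit to the continuation in which the algorithm's choice sits on the wrong side of a large directed cut. This mirrors the tight DAG example of Bar-Noy and Lampis \cite{barnoy:lampis}, whose optimal online bound is exactly $\frac{2}{3\sqrt{3}}$, and the objective is to reproduce their adversarial edge-routing inside our myopic framework.

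With the construction in hand, the final step is to express the algorithm's accumulated cut $\mathrm{ALG}$ and the optimal cut $\mathrm{OPT}$ as functions of $x$ and to bound their ratio. I expect the analysis to reduce to maximizing an expression of the form $x(1-x^2)$, whose unique maximum on $(0,1)$ is attained at $x = 1/\sqrt{3}$ with value $\frac{2}{3\sqrt{3}}$; choosing this $x$ pushes the competitive ratio to $\frac{2}{3\sqrt{3}}$ up to an additive $\epsilon$ arising from discretizing the weights. The $\sqrt{3}$ appearing in the statement is precisely the signature of this one-variable optimization.

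The main obstacle will be the tension between the \emph{strength} of the Q-Type 3 oracle and the requirement of indistinguishability. Since Q-Type 3 is the most informative of our oracle models, it hands the algorithm the most global data -- the total in- and out-weights -- so the adversary must arrange its branching continuations so that these aggregate quantities, and not merely the local edge weights, coincide across branches at every decision point. Designing a single graph family that is simultaneously rich enough to force the tight $\frac{2}{3\sqrt{3}}$ ratio yet constrained enough to keep all Q-Type 3 views identical until commitment is the delicate balance on which the whole proof rests.
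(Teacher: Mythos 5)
Your high-level strategy --- specialize to Max-Di-Cut, whose directed cut function is submodular, and import the $\frac{2}{3\sqrt{3}}$ online hardness of Bar-Noy and Lampis --- is exactly the route the paper takes. But the paper does not re-derive their adversarial construction: it invokes Theorem~\ref{barnoybound} as a black box and reduces the myopic model to the Bar-Noy--Lampis data model via two short claims. Your proposal instead promises to ``reproduce their adversarial edge-routing inside our myopic framework,'' which leaves the entire technical content --- the parametrized graph family, the indistinguishability invariant at each decision point, and the $x(1-x^2)$ optimization --- as a placeholder. As written this is a plan for a proof, not a proof, and the part you defer is precisely the part that is hard.

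Two concrete gaps. First, you have the model comparison backwards: Q-Type 3 does \emph{not} give the algorithm ``the weights of edges to previously revealed vertices.'' Claim~\ref{info} shows, via a directed 3-cycle whose six cut values satisfy a linear dependency, that the function values accessible under Q-Type 3 do not determine individual directed edge weights --- only quantities such as $w(u,v)+w(v,u)$ and the totals $w_{in},w_{out}$ are recoverable. The Bar-Noy--Lampis data model is therefore strictly \emph{stronger}, which is exactly why their lower bound transfers: any Q-Type 3 myopic algorithm can be simulated by an algorithm in their model, so their adversary already defeats yours. Your mischaracterization is harmless for a lower bound (you would be fighting a better-informed algorithm), but it means the ``delicate balance'' you worry about in your final paragraph is a non-issue, and conversely that the clean reduction is available and makes the from-scratch construction unnecessary. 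Second, you never address the fact that the myopic framework reveals $n=|\mathcal{N}|$ to the algorithm before any queries, whereas the Bar-Noy--Lampis adversary may exploit the freedom to end the input early. The paper closes this with a padding argument (Claim~\ref{revealn}): announce the largest $n$ the adversary might use and fill unused positions with isolated vertices, which alter no cut value and hence no marginal. Without this step the transfer of the online Max-Di-Cut bound into the stated model does not go through.
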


Theorem~\ref{onlinetheorem} is obtained by directly applying the hardness result of Bar-Noy and Lampis in Max-Di-Cut \cite{barnoy:lampis}, which, to the best of our knowledge, has not been studied in the context of the Buchbinder \emph{et al.} double greedy algorithm. We also show that the deterministic algorithm with $\frac{1}{3}$ approximation ratio guarantee for Max-Di-Cut by Bar-Noy and Lampis is in fact an instantiation of the double-sided greedy algorithm when the submodular function $f$ is the directed cut function. 

Extending to the class of (deterministic) fixed and adaptive priority algorithms in Theorem~\ref{fixedtheorem} and \ref{adaptivetheorem}, we construct submodular functions and corresponding adversarial strategies that would force an inapproximability ratio strictly less than $\frac{1}{2}$. In terms of oracle restrictiveness, our inapproximation holds for the \emph{already attained partial solution query} model (Q-Type 2), which is more powerful than what is sufficient to achieve the $\frac{1}{2}$ ratio by the online randomized greedy algorithm (Q-Type 1). A comprehensive description of the different oracle models will be covered in Section~\ref{queryclasses}. Our proof is computer assisted, in that the objective function value for each possible subset is explicitly computed through linear programming.

\begin{theorem}
\label{fixedtheorem}
 There exists a problem instance such that no fixed priority double-sided myopic algorithm with respect to oracle model Q-Type 2 can achieve an approximation ratio better than 0.428.
\end{theorem}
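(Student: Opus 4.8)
The plan is to defeat the entire class of fixed priority double-sided myopic algorithms with a single, explicitly given submodular instance on a small ground set $\mathcal{N}=\{1,\dots,8\}$, and to have that instance produced by a linear program rather than guessed by hand. Concretely, I would set up an LP whose variables are the $2^{8}$ values $\{f(S)\}_{S\subseteq\mathcal{N}}$, impose submodularity as the finitely many linear inequalities $f(S\cup T)+f(S\cap T)\le f(S)+f(T)$, normalize $f(\emptyset)=f(\mathcal{N})=0$, and then add constraints encoding the adversary's ability to trap every such algorithm. Solving the LP yields the numeric function recorded in \texttt{datafixed.csv}, whose optimum $\mathrm{OPT}=\max_S f(S)$ is attained on a four-element set, and the reported objective value certifies the ratio bound $0.428$.

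The crux is reducing the continuum of possible algorithms to a finite, LP-expressible family of constraints, and here I would lean on two features of the model. First, a fixed priority algorithm must commit to its ordering of $\mathcal{N}$ before any value is revealed; if the instance is built so that all singletons share the same value and all elements present the same Q-Type 2 profile at the moment they are first examined, then by relabeling we may assume without loss of generality that the algorithm processes $1,2,\dots,8$ in index order. Second, under Q-Type 2 the decision at step $i$ (add $u_i$ to $S_{i-1}$, or drop it from $T_{i-1}$) may depend only on queries to the already-attained partial solution, so the algorithm never learns the interactions among elements it has not yet reached. I would exploit this by forcing the relevant marginals among the still-symmetric elements to coincide at each stage, so that execution is driven down a short, canonical decision tree; the set of output sets $S_8$ reachable by \emph{some} algorithm in the class---accounting for both the include and the exclude choice at each symmetric step---is then a small, explicit family.

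With that family in hand, the remaining LP constraints simply demand that every reachable output set has value at most $\rho\cdot\mathrm{OPT}$, and the objective minimizes $\rho$ (linearized by fixing $\mathrm{OPT}=1$ and minimizing the largest reachable output value). Since the symmetry conditions forcing the canonical path and the ``bad output'' conditions are all linear in the $f(S)$, the whole construction is a genuine LP; its solution both exhibits a concrete $f$ and proves, by direct substitution of the reachable-output values, that no algorithm in the class can exceed the stated ratio.

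The main obstacle I expect is the faithful encoding step: I must argue that the symmetry and Q-Type 2 restrictions really do collapse all orderings and all deterministic decision rules into the finite decision tree fed to the LP, so that the bound holds against every fixed priority double-sided myopic algorithm and not merely against Buchbinder's specific $a_i\ge b_i$ rule. Getting the indistinguishability conditions right---strong enough to force the canonical execution yet consistent with submodularity---is delicate, and is precisely why the construction is computer-assisted. The secondary task, verifying that the LP output satisfies all submodularity inequalities and confirming the value of $\mathrm{OPT}$, is routine but is likewise delegated to the computer given the $2^{8}$ subsets involved.
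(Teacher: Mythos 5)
Your proposal follows essentially the same route as the paper: an LP over the $2^{8}$ values $f(S)$ with submodularity, normalization, indistinguishability, and ``every reachable extension has value at most $1$'' constraints, solved for $n=8$ with the adversary controlling only the first $k=4$ decisions, yielding an objective of $c=2.3333$ and hence the ratio $1/c\approx 0.428$. One point needs more care than your ``relabel to index order'' framing suggests: the adversary cannot commit to an ordering in advance (no single fixed instance with a fixed order defeats all deterministic algorithms, since that would require \emph{every} subset to have small value), but must instead keep a \emph{pair} $\{a_j,r_j\}$ indistinguishable at round $j$ under the Q-Type 2 queries and reveal the identity of the $j$-th item only after seeing the decision --- and it is precisely this pairwise (rather than global) indistinguishability requirement that lets the fixed-priority LP reach $2.3333$, whereas demanding that all still-undecided items remain mutually symmetric, as your phrasing could be read, over-constrains the LP and certifies only the weaker adaptive-priority bound of $0.432$.
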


\begin{theorem}
\label{fixedtheoremtype3}
 There exists a problem instance such that no fixed priority double-sided myopic algorithm with respect to oracle model Q-Type 3 can achieve an approximation ratio better than 0.450.
\end{theorem}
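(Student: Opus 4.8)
The plan is to exhibit an explicit submodular function on a ground set of eight elements and argue, via the priority/nemesis methodology, that every fixed priority double-sided myopic algorithm operating under the Q-Type 3 oracle is confined to a solution of value at most $1$, whereas the optimum value is $\frac{20}{9}$; this yields the ratio $\frac{9}{20}=0.45$. Since the instance lives on a fixed small ground set, I would specify $f$ by tabulating its value on all $256$ subsets (as in the \texttt{datafixed.csv} table) and verify submodularity directly from the pairwise marginal-difference inequalities.

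Because this is a fixed priority lower bound, the order in which elements are processed is committed before any oracle response is seen, so a single fully-known instance cannot suffice (an algorithm could hard-code positions); the work must be done by the restricted oracle. The key mechanism I would use is indistinguishability under the Q-Type 3 oracle: I would design $f$ so that, at each decision point, the relevant queries (the marginal of adding the element to the current accepted set $S_{i-1}$, the marginal of deleting it from $T_{i-1}$, and the additional information that the strongest oracle model Q-Type 3 provides) return identical responses across the elements the adversary wishes to keep interchangeable. When items look identical, the fixed order is effectively under adversarial control, and the adversary may bind the physical identity of each accepted or rejected element adaptively, steering every admissible IN/OUT sequence into a low-value subset while keeping a high-value subset (here $\{4,5,6,7\}$ or $\{5,6,7,8\}$, each of value $\frac{20}{9}$) available as the optimum.

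Rather than search by hand, I would cast the construction as a linear program exactly as in the proof of Theorem~\ref{fixedtheorem}: the $256$ subset values are the variables, submodularity supplies the linear constraints, additional equalities encode the Q-Type 3 indistinguishability that must hold along every reachable trajectory, one constraint per algorithmic decision path bounds the value of the forced solution by a target $v$, and a final constraint forces $\mathrm{OPT}\ge V$; the objective maximizes the gap between $V$ and $v$. A finite (computer-assisted) enumeration of the decision paths --- finite because there are eight elements and, under the imposed symmetry, only boundedly many distinguishable query-response sequences --- guarantees that the family of forced-value constraints is complete, so the LP optimum certifies the bound against all algorithms simultaneously. Solving the LP produces the tabulated values with $v=1$ and $V=\frac{20}{9}$.

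The main obstacle I anticipate is faithfully modeling the Q-Type 3 oracle inside the LP. Q-Type 3 is strictly more informative than the Q-Type 2 oracle of Theorem~\ref{fixedtheorem}, so more pairs of elements become distinguishable and the adversary must keep its items symmetric under a richer query set; this shrinks the feasible region and is precisely why the attainable inapproximation weakens from $0.428$ to $0.45$. Concretely, I must ensure that (i) every decision an informed algorithm could make under Q-Type 3 is represented by some enumerated path, so no algorithm escapes the forced-value constraints, and (ii) the rational optimum returned by the solver is genuinely submodular, which I would confirm by re-checking all marginal-difference inequalities exactly rather than numerically. Getting the indistinguishability equalities exactly right --- tight enough to force bad decisions yet loose enough to remain jointly feasible with submodularity --- is the delicate part of the argument.
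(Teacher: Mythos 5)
Your proposal is correct and follows essentially the same route as the paper: the paper's proof of Theorem~\ref{fixedtheoremtype3} likewise sets up an LP over all $2^8$ subset values with submodularity constraints, Q-Type~3 indistinguishability equalities along every reachable acceptance/rejection trajectory (the paper's \textbf{Cond.~2$\dagger$}, requiring equal marginals of $a_{i+1}$ and $r_{i+1}$ with respect to \emph{every} subset of the already-revealed items), a cap of $1$ on every solution extending a forced configuration, and an objective maximizing $f(OPT)$, yielding $f(\{5,6,7,8\})=20/9$ and hence the $9/20=0.45$ bound. The only cosmetic difference is that you maximize a gap between two targets $V$ and $v$ while the paper normalizes the forced value to $1$ and maximizes $V$ alone, which is equivalent.
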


\begin{theorem}
\label{adaptivetheorem}
 There exists a problem instance such that no adaptive priority double-sided myopic algorithm with respect to oracle model Q-Type 2 can achieve an approximation ratio better than 0.432.
\end{theorem}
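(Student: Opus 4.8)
The plan is to prove the bound by an adversary argument in the adaptive priority game, with the witnessing submodular function constructed and certified by linear programming, paralleling the fixed-priority construction behind Theorem~\ref{fixedtheorem} but strengthened so as to neutralize the algorithm's adaptive reordering power. I would work over a fixed ground set $\mathcal{N}$ of eight elements (matching the tabulated data) and treat the values $f(S)$ over all $2^{8}$ subsets $S\subseteq\mathcal{N}$ as the decision variables of an LP. Submodularity is imposed as the linear marginal inequalities $f(S\cup\{x\})-f(S)\ge f(T\cup\{x\})-f(T)$ for $S\subseteq T$ and $x\notin T$, and I would normalize by fixing $\mathrm{OPT}=\max_{S}f(S)$ to a constant, so that the approximation ratio becomes a single linear quantity $\rho$ to be driven down.

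First I would describe the adaptive adversary as a strategy over a decision tree. The adversary maintains a set of candidate items consistent with every Q-Type 2 response issued so far; whenever the algorithm announces its (possibly recomputed) priority order and selects the highest-priority available item, the adversary answers the relevant oracle query and, after observing the irrevocable in/out decision, branches the maintained instance to keep the eventual objective low. The essential point is that under Q-Type 2 the information released per item is only the marginal value relative to the already-attained partial solution, so at each step a whole group of items is mutually indistinguishable; the adversary exploits this to argue that the adaptive choice of which item to process next conveys no advantage over a worst-case presentation within that group.

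Next I would reduce the game to a finite system of linear constraints. Once the adversary's branching structure is fixed combinatorially, each root-to-leaf path of the tree specifies a sequence of in/out decisions and hence a final solution set, and the requirement that every such solution have value at most $\rho\cdot\mathrm{OPT}$ is linear in the $f(S)$. Minimizing $\rho$ subject to submodularity, the normalization, and one such constraint per leaf then yields simultaneously the extremal function $f$ and the numerical bound. I would read off the optimal $\rho$ and exhibit the explicit table of values $f(S)$, so that submodularity, the value of $\mathrm{OPT}$, and the forced solution value along each path can be verified directly, giving the claimed $\rho\le 0.432$.

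The main obstacle is correctly and tightly encoding adaptivity. Unlike the fixed case, where a single presentation order suffices, here the adversary must defeat an algorithm that rebuilds its priority order after each response, so the decision tree genuinely branches on the algorithm's choices and the argument must certify that no branch, and hence no adaptive reordering, escapes the bound. Concretely, the difficulty is to design the candidate item set and the branching so that (i) all Q-Type 2 answers remain consistent across the instances the adversary keeps alive, and (ii) the symmetry among indistinguishable items is strong enough that adaptive selection collapses onto the cases already covered by the leaves. Securing this indistinguishability while still forcing a strict gap below $\tfrac12$ is exactly what keeps the adaptive bound slightly weaker than the fixed bound of Theorem~\ref{fixedtheorem}, and it is the crux of the construction.
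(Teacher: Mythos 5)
Your plan is essentially the paper's proof: the authors also construct $f$ on an $8$-element ground set via an LP over all $2^8$ values $f(S)$ (with submodularity as linear constraints, the forced solutions bounded by $1$, and $f(\mathrm{OPT})$ maximized, which is equivalent to your normalized minimization of $\rho$), and they neutralize adaptivity exactly as you identify at the crux — by strengthening the fixed-priority indistinguishability condition so that \emph{all} unprocessed items have identical Q-Type 2 marginals with respect to every attained partial solution during the first $k=4$ rounds, so the decision tree branches only on accept/reject and the adversary names the processed item $a_i$ or $r_i$ afterwards. Solving this LP yields $c = 2.3158$ and hence the $1/c \approx 0.432$ bound.
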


\section{The Double-Sided Myopic Algorithms Framework}
\label{framework}

By integrating a \emph{restricted} value oracle model in a priority framework, we propose a general class of \emph{double-sided myopic algorithms} that captures the Buchbinder \emph{et al.} double-sided greedy algorithm. We rephrase the double-sided procedure as a single-sided sweep, but with access to a pair of complementary objective functions - as opposed to simultaneously evolving a bottom-up and a top-down solution. This interpretation admits a myopic behavior common to most greedy algorithms; and we show how this facilitates an adaptation of a priority-like framework. To make this formalization precise, we must specify 
how information about input items is represented and accessed. The generality of the unconstrained submodular maximization problem raises some representational issues leading to a number of different input presentations. To address these issues, we employ a \emph{marginal value} representation that is compatible with both the value oracle model and the priority framework.

On the one hand, a precise description of a data item is necessary in determining an ordering of the input set, as well as in quantifying the availability of information in the decision step. On the other hand, the value oracle model 
measures complexity in terms of information access. An apparent incompatibility arises when an exact description of a data item can trivialize query complexity - as in the case of the \emph{marginal value} representation, where exponentially many queries are needed to fully describe an item when $f$ is an arbitrary submodular function. For this reason, we propose a hierarchy of oracle restrictions that categorizes the concept of myopic \emph{short-sightedness}, while preserving the fundamental characteristics of the priority framework. This consequently gives rise to a hierarchy of algorithmic models, which will be actualized once we define the algorithm's \emph{internal memory}. We conclude this section by expressing the Buchbinder \emph{et al.} double-sided greedy algorithm under the double-sided myopic framework.

\subsection{Value Oracle and the Marginal Value Representation}
\label{sec:oracle-models}

For succinctly encoded problems, the objective is a function of an explicitly given collection of input item attributes; that is, weighted adjacencies in Max-Cut and Max-Di-Cut, and distances and opening costs in facility location problems. In contrast, in general submodular maximization problems, other than labels, the elements of $\mathcal{N}$ by themselves do not convey pertinent information. 
Therefore we interpret $\mathcal{N}$ as a fixed ground set, and the input instance as the objective function itself, drawn from the family $\mathcal{F} = \{f | f:2^\mathcal{N} \to \mathbb{R} \}$ of all submodular set functions over $\mathcal{N}$. To avoid having an exponentially sized input, we employ a \emph{value oracle} as an intermediary between the algorithm and the input function. That is, given a query $S \subseteq \mathcal{N}$, the value oracle will answer the question: ``What is the value of $f(S)$?" 
By an abuse of notation, we will interchangeably refer to $f$ as the objective function (\emph{i.e.} when referring to $f(S)$ as a real number) and as the value oracle (\emph{i.e.} when an algorithm submits a query to $f$). 

We also introduce for notational convenience a complementary oracle $\bar{f}$, such that $\bar{f}(X) = f(\mathcal{N} \setminus X)$, for input set $\mathcal{N}$. 
This allows us to express the double-sided myopic algorithm similar to the priority setting in \cite{Borodin:2002:PA:545381.545481}, where the solution set $X$ is constructed item by item, using only locally available information.
In other words, the introduction of $\bar{f}$ allows access to $f(\mathcal{N} \setminus X)$ using $X$ (which is composed of items that have already been considered) as query argument, instead of $\mathcal{N} \setminus X$.
The {\bf double-sidedness} of the framework follows in the sense that $f$ and $\bar{f}$ can be simultaneously accessed.

We wish to model \emph{greedy-like} algorithms that process the problem instance item by item. 
But what is an item when considering an arbitrary submodular function? While the natural choice is that an item is an element of ${\cal N}$ (to include or
not include in the solution $S$), for arbitrary submodular maximization the input is a function $f$ (or more precisely, an oracle interface of $f$) and thus the notion of a \emph{data item} becomes somewhat problematic. To address this issue, we propose the \emph{marginal value representation}, in which $f$ is instilled into the elements of $\mathcal{N}$. Specifically, we could describe a data item as an element $u \in \mathcal{N}$, plus a list of marginal differences $\rho(u|S) = f(S \cup \{u\}) - f(S)$, and $\bar{\rho}(u|S) = \bar{f}(S \cup \{u\}) - \bar{f}(S)$ for every subset $S \subseteq \mathcal{N}$. 
The impracticality in using a complete representation in this form is evident as the space required is exponential in $|\mathcal{N}|$. Furthermore,
such a complete marginal representation would lead to a trivial optimal greedy  
algorithm by adaptively choosing the next item to be one 
that is included in an optimal solution (given what has already been accepted). Therefore we assume that 
an oracle query must be made by the algorithm in order to access each marginal value. 
In terms of inapproximability arguments, if we impose certain constraints on what oracle queries are allowed during the computation, then
the model allows us to justify when input items are indistinguishable.
In fact, our inapproximation results will not be based on bounding
of the number of oracle queries but rather will  be imposed by the restricted
myopic nature of the algorithm. 

\subsection{Classes of Relevant Oracle Queries}
\label{queryclasses}

The {\bf myopic} condition of our framework is imposed both by the nature of the ordering in the priority model, which we describe later on, as well as by restricting the algorithm to make only certain types of oracle queries that are \emph{relevant}. 
To avoid ambiguity, we emphasize that the value oracles are given in terms of $f$ and $\bar{f}$, with $\rho$ and $\bar{\rho}$ being used purely for notational simplicity. That is, when we say that the algorithm learns the value $\rho(u | S)$, we assume that it queries both $f(S \cup \{u\})$ and $f(S)$.
At iteration $i$, define (respectively) $X_{i-1}$ and $Y_{i-1}$ to be the currently accepted and rejected sets, and $u_{i}$ to be the next item considered by the algorithm. We now introduce three models of relevant oracle queries in hierarchical ordering, starting with the most restrictive: \\

\begin{itemize}
\item[] \uline{Next attainable partial solution query} ({\bf Q-Type 1})

The algorithm is permitted to only query $f(X_{i-1} \cup \{u_i\})$ and $\bar{f}(Y_{i-1} \cup \{u_i\})$, corresponding to the values of the next possible partial solution. In terms of the data item model, this is equivalent to learning $\rho(u_i | X_{i-1})$ and $\bar\rho(u_i | Y_{i-1})$ for the item $u_i$. \\
We note that in this and in all our models, we can then use this information
in any way. For example, the deterministic algorithm of  
Buchbinder \emph{et al.} greedily chooses 
to add $u_i$ to $X_{i-1}$ if 
$\rho(u_i | X_{i-1}) = a \geq b = \bar\rho(u_i | Y_{i-1})$. In our model, the decision
about $u_i$ can be any (even non-computable) function of $a$ and $b$ and the
``history''of the algorithm thus far.

\item[] \uline{Already attained partial solutions query} ({\bf Q-Type 2})

In this model we allow queries of the form $\rho(u_i | X_j)$ and $\bar\rho(u_i | Y_j)$ for all $j<i$.

\item[] \uline{All subsets query} ({\bf Q-Type 3})

By making an all subsets query on $u_i$, the algorithm learns the marginal gains $\rho(u_i | S)$ and $\bar\rho(u_i | S)$ for every $S \subseteq X_{i-1} \cup Y_{i-1}$. Essentially, the algorithm is given full disclosure of the submodular function $f$ and $\bar{f}$ over the set of currently revealed items. 
Note that in this very general model, the algorithm can potentially
query exponentially many sets so that in principle such algorithms are not
subject to the $\frac{1}{2}$ inapproximation result of Feige et al
\cite{feige2010oblivious}.

\end{itemize}

Finally, observe that the above classes are ordered by inclusion. That is, if $Q^{i}(\mathcal{N})$ is the set of all queries on input $\mathcal{N}$ permitted under query type $i$, then $Q^{i}(\mathcal{N}) \subseteq Q^{i+1}(\mathcal{N})$. 

\subsection{Internal Memory or History}

We define an algorithm's \emph{internal memory} or 
\emph{history}  as a record of the following:

\begin{itemize}
\item All previously considered items and the decisions made for these items.
\item 

The outcomes of all previous relevant query results.

\item
Anything that can be deduced from all previously considered items, decisions and 
relevant queries. That is, in the priority framework (section \ref{sec:priority-models}), the order in which the items are considered will determine that certain items in $\mathcal{N}$ \emph{cannot} take on certain marginal values. In other words, the algorithm may rule out from $\mathcal{F}$ all submodular functions that would contradict the observed ordering. 
\end{itemize}


Recalling the definition of $X_{i-1}$ and $Y_{i-1}$, let $\mathcal{N}_{i-1} = X_{i-1} \cup Y_{i-1}$ be the set of all previously seen (and decided upon) items at the start of iteration $i$. Let $u_i$ be the next element, and allow the algorithm to perform any possible relevant queries. We summarize the algorithm's internal memory under each query restriction type: 

\begin{itemize}

\item[] \uline{Q-Type 1 myopic model} 

An algorithm with Q-Type 1 oracle access has the record of:

\begin{itemize}
\item The decision made for every $u \in \mathcal{N}_{i-1}$. 
\item $\rho(u_j | X_{j-1})$ and $\bar\rho(u_j | Y_{j-1})$ for $0 < j \leq i$. 
\end{itemize}

\item[] \uline{Q-Type 2 myopic model}

An algorithm with Q-Type 2 oracle access has the record of:
\begin{itemize}
\item The decision made for every $u \in \mathcal{N}_{i-1}$.
\item $\rho(u_j | X_k)$ and $\bar\rho(u_j | Y_k)$ for $0 \leq k < j \leq i$.
\end{itemize}

\item[] \uline{Q-Type 3 myopic model} 

An algorithm with Q-Type 3 oracle access has the record of:

\begin{itemize}
\item The decision made for every $u \in \mathcal{N}_{i-1}$.
\item $\rho(u | S)$ and $\bar\rho(u | S)$ for all $u \in \mathcal{N}_i$ and all $S \subseteq \mathcal{N}_i$.
\end{itemize}

\end{itemize} 



\subsection{Priority Models}
\label{sec:priority-models}

The other aspect of the priority framework that we need to specify is
the order in which input items are considered. 
We present a high-level description of the generic templates of double-sided myopic algorithms.  As in the priority framework, we categorize double-sided myopic algorithms into the following subclasses: {\bf online}, {\bf fixed} priority, and {\bf adaptive} priority. For all templates, the decision step
remains the same. Namely, based on the history (which depends on
the particular value oracle model) of previous relevant queries,
and
relevant queries corresponding to  current item being considered, the
algorithm makes an irrevocable decision for the current item. 
Let $\mathcal{N}$ be the input set, whose length $n$ is the only information that is initially accessible (\emph{i.e.} before any queries are made) to the algorithm. We note that the transparency of the input length $n$ allows us to capture a broader and potentially more powerful class of algorithms compared to that of the original priority framework. 
Or more precisely, it prevents an adversary from abruptly ending a computation freeing an algorithm from this concern. 

An online double-sided myopic algorithm conforms to the standard template of an online algorithm: 

\begin{framed}
		
                \noindent \uline{Online 2-Sided Myopic Algorithm}
		\begin{itemize}[]
		\item while not empty($\mathcal{N}$)
                \end{itemize}
		\begin{adjustwidth}{2.3em}{0pt} \begin{itemize}[]
			\item \emph{next} := lowest index (determined by adversary) of items remaining in $\mathcal{N}$ 
\item {\bf Relevant Query:} 
Perform any set of relevant queries and update the \emph{internal memory}
\item {\bf Decision:} As a function of the \emph{internal memory}, irrevocably accept or reject $u_{next}$, and remove $u_{next}$ from $\mathcal{N}$
\end{itemize}
\end{adjustwidth}
        
\end{framed}

A fixed priority algorithm has some limited ability to determine the ordering of 
the input items. Namely, such an algorithm specifies an injective priority function $\pi : \mathbb{R} \times \mathbb{R} \to \mathbb{R}$, such that the item that minimizes $\pi$ corresponds to the item of highest priority. 
In the generality of submodular function maximization, lacking any other information, the priority of an input $u$ is determined as a function of $u$'s marginal gains in $f$ and $\bar{f}$ with respect to the empty set. The item $u_{next} \in \mathcal{N}$ which minimizes $\pi$ is then given to the algorithm. 
Our inapproximations also
hold if the algorithm is also (say) given the maximum value of any $f(\{u\})$. 
In such a case, the priority can be a more complex function of the precise values of these
marginals. We emphasize that $\pi$ is determined before the algorithm makes any oracle queries, and cannot be changed. 
The structure of a fixed priority algorithm is as follows: 

\begin{framed}
		
                \noindent \uline{Fixed Priority 2-Sided Myopic Algorithm} 
		\begin{itemize}[]

		 \item {\bf Ordering:} Specify a priority function $\pi: \mathbb{R}\times \mathbb{R} \to \mathbb{R}$

		\item while not empty($\mathcal{N}$) 
\end{itemize}
		\begin{adjustwidth}{2.3em}{0pt} \begin{itemize}[]
			  \item\emph{next} := index $i$ of the item in $\mathcal{N}$ that minimizes $\pi(\rho(u_i | \emptyset), \bar\rho(u_i | \emptyset))$ 
\item {\bf Relevant Query:} Using $u_{next}$ as the next item, perform any set of relevant queries and update the \emph{internal memory}
\item {\bf Decision:} As a function of the \emph{internal memory}, irrevocably accept or reject $u_{next}$, remove $u_{next}$ from $\mathcal{N}$ and update internal memory.
		\end{itemize}
\end{adjustwidth}
        
\end{framed}

In the most general class of adaptive priority algorithms, a new ordering function may be specified after each item is processed. 
Here we calculate the priority of an item $u$ by extending $\pi$ to take as input all marginal differences for $u$ permissible in the chosen query model. 
More precisely, the algorithm does not inquire the marginals for all items currently in $\mathcal{N}$, but is simply given the item of highest priority $u$, namely $u = \argmin_v[\pi(Q(v))]$. 
We define $Q(u)$ to be the vector of $u$'s marginal gains accessible under the appropriate relevant query model. Observe that in Q-type 2 and 3, the length of $Q$ increases with the iterations.


\begin{framed}
		
                \noindent \uline{Adaptive Priority 2-Sided Myopic Algorithm} 
		\begin{itemize}[]
		\item while not empty($\mathcal{N}$) 
		\end{itemize}
			\begin{adjustwidth}{2.3em}{0pt} 
			\begin{itemize}[]
			\item{\bf Ordering:} Based on the \emph{internal memory}, specify a priority function $\pi$
			 \item \emph{next} := index $i$ of the item in $\mathcal{N}$ that minimizes $\pi(Q(u_i))$ 
			\item {\bf Relevant Query:} Using $u_{next}$ as the next item, perform any set of relevant queries and update the \emph{internal memory}
			\item {\bf Decision:} As a function of the \emph{internal memory}, irrevocably accept or reject $u_{next}$, remove $u_{next}$ from $\mathcal{N}$ and update internal memory.
			\end{itemize}
			\end{adjustwidth}
        
\end{framed}

We remind the reader that for the fixed and adaptive priority models, updating the internal memory also means deducing that certain marginal descriptions cannot exist and applying relevant queries (for the given $Q$-type) to obtain additional information.
In particular, if $u_{next} \in \mathcal{N}$ is the item with the minimum $\pi$ value, then any item $u_j$ 
with $\pi(Q(u_j)) < \pi(Q(u_{next}))$ cannot appear later.
Knowledge of what items (in terms of marginal representation)
cannot be in $\mathcal{N}$ can be assumed to
be part of the internal memory.  

Finally, we note that all conditions mentioned here are devoid of complexity assumptions, and thus our inapproximability arguments result from information theoretic arguments. That is, the complexity or even computability
of the ordering and decision steps is arbitrary, and there is no limitation on the size of the memory.
 
\subsection{Remodeling the Buchbinder \emph{et al.} Double-Sided Greedy Algorithm}

\begin{claim}
\label{reduction}
 The deterministic double-sided greedy algorithm in \cite{Buchbinder} can be modeled by Algorithm~\ref{alg2}, a Q-Type 1 online double-sided myopic algorithm. \end{claim}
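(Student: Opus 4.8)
The plan is to exhibit a state correspondence between the two algorithms and prove, by induction on the iteration index, that when run on the same input with the same item ordering $u_1,\ldots,u_n$ they pass through matching configurations and return the same set. Concretely, I would identify the accepted set $X_i$ of the myopic Algorithm~\ref{alg2} with the bottom-up set $S_i$ of Algorithm~\ref{alg1}, and the rejected set $Y_i$ with the complement of the top-down set, i.e. $Y_i = \mathcal{N}\setminus T_i$. The invariant to maintain is therefore
\[
X_i = S_i \quad\text{and}\quad T_i = \mathcal{N}\setminus Y_i
\]
for every $i$. The base case $i=0$ is immediate from the initializations $S_0=\emptyset=X_0$ and $T_0=\mathcal{N}$, $Y_0=\emptyset$.

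The crux is to verify that, under this invariant, the two Q-Type~1 queries available to the myopic algorithm reproduce exactly the quantities $a_i$ and $b_i$ computed in Algorithm~\ref{alg1}. For the accepted side this is direct: $\rho(u_i | X_{i-1}) = f(X_{i-1}\cup\{u_i\}) - f(X_{i-1}) = f(S_{i-1}\cup\{u_i\}) - f(S_{i-1}) = a_i$. For the rejected side I would unwind the definition of the complementary oracle $\bar f(Z) = f(\mathcal{N}\setminus Z)$ together with the invariant $T_{i-1}=\mathcal{N}\setminus Y_{i-1}$, giving $\bar\rho(u_i | Y_{i-1}) = f(\mathcal{N}\setminus(Y_{i-1}\cup\{u_i\})) - f(\mathcal{N}\setminus Y_{i-1}) = f(T_{i-1}\setminus\{u_i\}) - f(T_{i-1}) = b_i$. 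Both of these are precisely the queries permitted by the Q-Type~1 (``next attainable partial solution'') model, so no disallowed oracle access is required; this is where the introduction of $\bar f$ earns its keep, since it lets the algorithm read off $b_i$ using the already-decided set $Y_{i-1}$ as the query argument rather than $T_{i-1}$.

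Given that the myopic algorithm now holds the same pair $a_i,b_i$, I would let its decision step be the (permitted) function that accepts $u_i$ into $X$ exactly when $a_i\geq b_i$ and rejects it into $Y$ otherwise. This matches the branch taken by Algorithm~\ref{alg1}, tie-break at $a_i=b_i$ included, so the updates preserve both halves of the invariant: accepting yields $X_i = X_{i-1}\cup\{u_i\} = S_{i-1}\cup\{u_i\} = S_i$ with $T,Y$ unchanged, while rejecting yields $Y_i = Y_{i-1}\cup\{u_i\}$, whence $\mathcal{N}\setminus Y_i = T_{i-1}\setminus\{u_i\} = T_i$ with $X,S$ unchanged. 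Inducting to $i=n$ gives $X_n = S_n$, so the myopic algorithm returns Algorithm~\ref{alg1}'s output.

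The one point requiring care, and the step I expect to be the main (though modest) obstacle, is the simultaneous nature of the induction: the query identities depend on the invariant, while preserving the invariant depends on the decisions, which in turn depend on the queries. I would therefore phrase a single induction in which the invariant at step $i-1$ establishes the query equalities, these force the decisions to agree, and the agreeing decisions re-establish the invariant at step $i$, closing the loop. It then remains only to observe that the whole simulation fits the online double-sided myopic template: the ordering is simply the externally supplied order $u_1,\ldots,u_n$ (no adaptive reordering is invoked), the relevant-query phase issues exactly the two Q-Type~1 queries, and the accept/reject rule is a legitimate function of the internal memory.
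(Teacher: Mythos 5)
Your proposal is correct and follows essentially the same route as the paper: the identification $X_i = S_i$, $Y_i = \mathcal{N}\setminus T_i$, the use of $\bar f$ to rewrite $f(T_{i-1}\setminus\{u_i\}) - f(T_{i-1})$ as $\bar\rho(u_i\,|\,Y_{i-1})$, and the observation that the resulting queries are exactly those permitted by Q-Type 1. Your explicit induction merely makes rigorous what the paper presents as a syntactic relabeling, so there is nothing to add.
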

\begin{proof}

To prove this claim, we relabel the variables and function calls on the syntactic level without modifying the algorithmic behavior, until we obtain a description that complies with the online myopic model. First we recall the formal description of the deterministic double-sided greedy algorithm (called \emph{DeterministicUSM}) in Algorithm~\ref{alg1}.

Define the variables $X_i = S_i$, $Y_i = \mathcal{N} \setminus T_i$, and rewrite Algorithm~\ref{alg1} in terms of $X_i$ and $Y_i$. 
To establish an online setting, we assume a predetermined ordering over $\mathcal{N}$. Furthermore, as argued in section~\ref{sec:oracle-models}, we will replace $\mathcal{N}$ by $\bar{f}$ as input parameter in order to demonstrate the algorithm's item-by-item behavior.
This is possible in Line 1, since $X_0 = Y_0 = \emptyset$ by definition. In the later steps, recovering $S_i$ from $X_i$ is trivial, but $T_i = \mathcal{N} \setminus Y_i$ would require access to $\mathcal{N}$. To resolve this problem, we make use of the complementary value oracle $\bar{f}$, with the following equalities

\begin{align}
	T_{i-1} \setminus \{u_i\} &= (\mathcal{N} \setminus Y_i) \setminus \{u_i\} = \mathcal{N} \setminus (Y_i \cup \{u_i\}) \\
	f(T_{i-1} \setminus \{u_i\}) - f(T_{i-1}) &= f(\mathcal{N} \setminus  (Y_i \cup \{u_i\})) - f(\mathcal{N} \setminus Y_i) \notag \\
	&= \bar{f}(Y_i \cup \{u_i\}) - \bar{f}(Y_i)
\end{align}

In this sense, we can replace $\mathcal{N}$ by $\bar{f}$ in the input parameter, since the necessary information about $\mathcal{N}$ is implicitly encoded in the function $\bar{f}$. Due to the absence of a predefined item ordering, we assume the items are labeled by an adversary. We emphasize that the ability to conceal $n$ is consequential only in the online model, and that our lower bounds for fixed and adaptive order algorithms hold even if $n$ is revealed to the algorithm. 

\begin{algorithm}[htpb]
 \begin{algorithmic}[1]
 \State $X_0 \gets \emptyset, Y_0 \gets \emptyset$
 \For{$i = 1 \text{ to } n$}
 \State $a_i\gets f(X_{i-1} \cup \{u_{i}\}) - f(X_{i-1})$
 \State $b_i\gets \bar{f}(Y_{i-1} \cup \{u_{i}\}) - \bar{f}(Y_{i-1})$
 \If{$a_i \geq b_i$} \State $X_i \gets X_{i-1} \cup \{u_{i}\}, Y_i \gets Y_{i-1}$ 
 \Else \State $Y_i \gets Y_{i-1} \cup \{u_{i}\}, X_i \gets X_{i-1}$ \EndIf
 \EndFor
 \State \textbf{return} $X_{n}$
 \end{algorithmic}
 \caption{OnlineMyopic($f,\bar{f}$)}
 \label{alg2}
\end{algorithm}

Combining these ideas, we describe the double-sided greedy algorithm as an online double-sided myopic algorithm in Algorithm~\ref{alg2} (\emph{OnlineMyopic}). Finally, observe that the value oracle access in Line 3 and 4 conform to the specifications imposed by Q-Type 1.

\end{proof}

\begin{claim} 
The randomized double-sided greedy algorithm in \cite{Buchbinder} can be modeled by a random-choice Q-Type 1 online double-sided myopic algorithm. \end{claim}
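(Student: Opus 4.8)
The plan is to reuse, almost verbatim, the reduction established in the proof of Claim~\ref{reduction}, changing only the decision step. First I would recall the randomized variant of the Buchbinder \emph{et al.} algorithm: it computes the same two marginals as the deterministic version, namely $a_i = f(S_{i-1}\cup\{u_i\}) - f(S_{i-1})$ and $b_i = f(T_{i-1}\setminus\{u_i\}) - f(T_{i-1})$, but in place of the threshold rule $a_i \geq b_i$ it sets $a_i' = \max\{a_i,0\}$ and $b_i' = \max\{b_i,0\}$ and then adds $u_i$ to $S$ with probability $a_i'/(a_i'+b_i')$ and removes $u_i$ from $T$ otherwise, adopting the convention of accepting $u_i$ when $a_i'+b_i'=0$. (Note that submodularity gives $a_i+b_i\geq 0$, so $a_i'+b_i'=0$ forces $a_i=b_i=0$ and the convention is inconsequential.)

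Next I would apply the identical change of variables $X_i = S_i$ and $Y_i = \mathcal{N}\setminus T_i$ used in Claim~\ref{reduction}, so that by the complementary-oracle identity one has $b_i = \bar f(Y_{i-1}\cup\{u_i\}) - \bar f(Y_{i-1})$. This shows that the two quantities $a_i = \rho(u_i\mid X_{i-1})$ and $b_i = \bar\rho(u_i\mid Y_{i-1})$ driving the randomized decision are exactly the marginals furnished by a Q-Type 1 relevant query on $u_i$; in particular, no queries beyond those already performed in Lines 3--4 of Algorithm~\ref{alg2} are needed. The resulting procedure is therefore Algorithm~\ref{alg2} with its deterministic accept/reject branch replaced by the randomized rule above.

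Finally I would argue that this randomized rule is a legitimate decision step for a \emph{random-choice} online double-sided myopic algorithm. As noted in Section~\ref{queryclasses}, the decision about $u_i$ may be \emph{any} function of $a_i$, $b_i$, and the history; the random-choice model simply augments this decision with access to independent random bits. Since $a_i'/(a_i'+b_i')$ is computed solely from $a_i$ and $b_i$, the randomized choice depends only on Q-Type 1 information together with fresh randomness, and so conforms to the model.

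The one point requiring care --- and the only real obstacle --- is the bookkeeping of the degenerate case $a_i'+b_i'=0$, namely confirming that the tie-breaking convention chosen here matches the one in \cite{Buchbinder}, so that the induced distribution over the returned set $X_n = S_n$ is \emph{identical} to that of the original randomized algorithm rather than merely close to it. Once the two decision rules are shown to coincide pointwise as randomized maps of the same inputs, the equality of output distributions follows immediately by induction on $i$.
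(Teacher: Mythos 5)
Your proposal is correct and follows exactly the route the paper takes: the paper's proof is the single sentence ``Using the same reduction from Claim~\ref{reduction}, the proof follows identically,'' and you have simply spelled out the details of that reduction --- the change of variables $X_i = S_i$, $Y_i = \mathcal{N}\setminus T_i$, the observation that the randomized rule consumes only the two Q-Type 1 marginals plus fresh randomness, and the (inconsequential, as you note) degenerate case $a_i'+b_i'=0$. Nothing is missing; you have merely been more explicit than the authors.
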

\begin{proof}
	Using the same reduction from Claim~\ref{reduction}, the proof follows identically.
\end{proof}

\section{A $\frac{2}{3\sqrt{3}}$ Inapproximation for the Online Case}
\label{online lowerbound}

Given a digraph $G(V,E)$ with non-negatively weighted edges $w(e)$, the cut value of a subset $V' \subseteq V$ is defined to be the weight of the edge set $C = \{ (u,v) | u \in V', v \in V\setminus V' \}$. In Max-Di-Cut, the objective is to find the vertex set that maximizes $\sum_{e \in C} w(e)$. Let $f : V \to \mathbb{R}^+$ be the cut value function, it can be easily verified that $f$ is non-monotone submodular. Therefore, to prove an inapproximability result 
(within some algorithmic model) for the general non-monotone submodular maximization problem, it suffices to demonstrate the existence of a hard instance of Max-Di-Cut that forces a bad approximation ratio for the algorithmic model under consideration. In particular, we utilize the online model and results of Bar-Noy and Lampis \footnote{To the best of our knowledge, the Buchbinder \emph{et al.} \cite{Buchbinder} and Bar-Noy and Lampis \cite{barnoy:lampis} papers were independent of each other.} \cite{barnoy:lampis}.

In the online model of \cite{barnoy:lampis}, 
when an input item $v$ is revealed, the algorithm is given access to the following information:

\begin{itemize}
\item $w_{in}(v)$, the total weight of incoming edges to $v$
\item $w_{out}(v)$, the total weight of outgoing edges from $v$
\item The weights of both in and out edges connecting $v$ to previously revealed vertices
\end{itemize}

Bar-Noy and Lampis prove the following theorem for their online model:

\begin{theorem} \cite{barnoy:lampis} No deterministic algorithm 
(within their data model) can achieve a competitive ratio of $\frac{2}{3\sqrt{3}}+\epsilon  \approx 0.385 + \epsilon$ for any $\epsilon > 0$ for the online Max-Di-Cut problem on DAGs. \footnote{When restricted to DAGs, Bar-Noy and Lampis also provide an algorithm for Max-Di-Cut with matching approximation ratio.}
\label{barnoybound}
\end{theorem}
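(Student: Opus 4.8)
Since Theorem~\ref{barnoybound} is a known result of Bar-Noy and Lampis, the plan is to reconstruct the adversary argument underlying it in their online data model. Because the algorithm is deterministic, I would frame the lower bound as a game against an \emph{adaptive} adversary who builds a DAG one vertex at a time: at each step the adversary commits only to the revealed vertex $v$'s declared totals $w_{in}(v)$ and $w_{out}(v)$ together with the weights of edges joining $v$ to the already-revealed vertices, observes the algorithm's irrevocable accept/reject decision, and only then decides how to continue. The crucial asymmetry is that the destinations of $v$'s out-edges, and hence which side of the cut they will ultimately straddle, are \emph{not} fixed when $v$ must be decided; only the aggregate out-weight is disclosed. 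This gives the adversary a ``promise-and-betray'' lever: a vertex placed in $V'$ wagers its (already known) in-edges from previously accepted vertices against its (as yet unrealized) out-edge contribution, and symmetrically a rejected vertex captures its in-edges but forfeits its out-weight.

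Concretely, I would use a layered DAG with all edges directed forward between consecutive layers, revealed in topological order. When a vertex $v$ in layer $i$ arrives, the algorithm sees exactly the in-edges from layer $i-1$ (the definite gain from \emph{rejecting} $v$, so that those edges leave $V'$) and the scalar $w_{out}(v)$ toward the not-yet-revealed layer $i+1$ (the speculative gain from \emph{accepting} $v$). The adversary tunes the in-to-out weight ratio so that the two options present the same local transcript, then completes layer $i+1$ so that whichever branch the deterministic algorithm committed to is the losing one, while an omniscient assignment of the full layered graph to $V'$ realizes a large optimum $\mathrm{OPT}$. To turn one per-layer loss into a constant-factor gap, I would iterate the gadget across many layers (with a weight-scaling parameter) so the losses compound, compute $\mathrm{OPT}$ on this family directly, and bound the algorithm's cut by a clean accept-versus-reject case analysis against the adversary's best response.

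The closed form $\frac{2}{3\sqrt{3}} = \sqrt{4/27}$ should fall out of a single low-dimensional optimization over the scaling parameter. Writing the algorithm's guaranteed fraction of $\mathrm{OPT}$ as a function of the fraction $p$ of weight the algorithm is driven onto one side yields an expression of the form $p(1-p)^2$, whose maximum over $p \in [0,1]$ is $\tfrac{4}{27}$ attained at $p=\tfrac13$; the resulting competitive ratio is its square root, $\frac{2}{3\sqrt 3}$. Taking the number of layers to infinity (equivalently driving the gadget parameter to its limit) and absorbing an arbitrarily small slack then produces the ``$+\epsilon$'' and rules out any ratio strictly exceeding the bound. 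That the bound is exactly this value, rather than merely some constant below $1$, is consistent with the accompanying matching algorithm on DAGs noted in the footnote.

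\textbf{Main obstacle.} The delicate part is the gadget itself: (i) making the revealed local views genuinely \emph{indistinguishable}, so that two completions of the instance share an identical transcript of $(w_{in}, w_{out}, \text{past edges})$ up to the decision point and therefore force the deterministic algorithm into the same choice, after which the adversary selects the damaging completion; (ii) keeping the entire construction acyclic so it remains a valid DAG under the topological reveal; and (iii) calibrating the layer weights so the per-layer trade-off optimizes to precisely the claimed closed form. Establishing $\mathrm{OPT}$ and the matching upper bound on the algorithm's cut for the constructed family are then careful but routine computations.
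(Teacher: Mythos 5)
The paper does not actually prove Theorem~\ref{barnoybound}: it is imported verbatim from Bar-Noy and Lampis \cite{barnoy:lampis} and used as a black box (together with Claims~\ref{info} and~\ref{revealn}) to derive Theorem~\ref{onlinetheorem}. So there is no in-paper argument to compare yours against; the question is whether your reconstruction stands on its own, and as written it does not. Everything that carries the weight of the theorem is deferred: you never exhibit the layered gadget, never verify that the two completions of the instance really produce identical transcripts of $(w_{in}, w_{out}, \text{past edges})$ up to the decision point, never compute $\mathrm{OPT}$ on the constructed family, and never carry out the accept/reject case analysis bounding the algorithm's cut. Your own ``Main obstacle'' paragraph concedes that items (i)--(iii) --- indistinguishability, acyclicity, and the weight calibration --- are exactly what is missing, and those are the entire content of the lower bound.

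The derivation of the constant is also numerological rather than argued. You observe that $\frac{2}{3\sqrt 3}=\sqrt{4/27}$ and that $\max_{p}\,p(1-p)^{2}=\frac{4}{27}$ at $p=\frac{1}{3}$, and then assert that ``the resulting competitive ratio is its square root'' without exhibiting any quantity in your construction whose optimization produces $p(1-p)^{2}$, let alone explaining why the competitive ratio would equal the square root of that maximum rather than the maximum itself. Without the concrete adversary --- a specific vertex sequence with declared in/out totals, the adaptive completion rule, and the resulting one-parameter optimization --- a reader cannot check that the trade-off takes this form or that the argument bottoms out at precisely $\frac{2}{3\sqrt{3}}$ rather than some other constant. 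To make this a proof you would need to reproduce the actual Bar-Noy--Lampis adversary in full, or do as the paper does and simply cite the result.
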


In order to generalize Theorem~\ref{barnoybound} to non-monotone submodular maximization, we need to show that the data model in \cite{barnoy:lampis} is powerful enough to simulate relevant value oracle queries. 
%
%

\noindent Let $X_{i-1}$ and $Y_{i-1}$ be the set of vertices accepted and rejected, respectively, by the algorithm when $v_i$ is revealed. Notice that from the third item in the above list, the algorithm can calculate the cut between $v_i$ and $S$, for any $S \subseteq X_{i-1} \cup Y_{i-1}$. 

\begin{claim} For Max-Di-Cut, any information obtainable through Q-Type 3 oracle queries can be inferred in the data model of \cite{barnoy:lampis}. The converse, however, does not hold.
\label{info}
\end{claim}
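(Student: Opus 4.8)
The plan is to prove the two directions separately, and the engine for both is an explicit formula for the Q-Type 3 marginals in terms of quantities visible in the Bar-Noy--Lampis (BNL) data model. Fix an iteration in which $u_i$ is revealed, let $S \subseteq X_{i-1} \cup Y_{i-1}$, and assume (as is standard for cut functions) that there are no self-loops. A direct accounting of which directed edges enter or leave the cut when $u_i$ is added to $S$ yields, for the directed cut function $f$, the identities $\rho(u_i \mid S) = w_{out}(u_i) - \sum_{y \in S}\big(w(u_i,y) + w(y,u_i)\big)$ and $\bar\rho(u_i \mid S) = w_{in}(u_i) - \sum_{y \in S}\big(w(u_i,y) + w(y,u_i)\big)$, the latter obtained by writing $\bar\rho(u_i\mid S) = f(\mathcal N\setminus(S\cup\{u_i\})) - f(\mathcal N\setminus S)$ and tracking the edges at $u_i$ that change status when $u_i$ is deleted from the complement of $S$. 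I would verify these two formulas first, as everything downstream rests on them.

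For the forward direction, observe that the right-hand sides involve only $w_{in}(u_i)$, $w_{out}(u_i)$ (the first two items of the BNL model) and the individual directed weights $w(u_i,y)$, $w(y,u_i)$ for $y \in X_{i-1}\cup Y_{i-1}$ (available from the third BNL item, as already noted before the claim). Hence every marginal revealed by a Q-Type 3 query on $u_i$ is a computable function of the BNL data at that iteration, and since this holds at every iteration, the entire Q-Type 3 history is inferable in the BNL model.

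For the converse, the same formulas show that the Q-Type 3 marginals depend on the directed weights \emph{only} through the symmetric sums $w(u_i,y) + w(y,u_i)$, together with the vertex totals $w_{in}(u_i), w_{out}(u_i)$; the orientation of an edge within such a pair is invisible to Q-Type 3. I would exploit this by exhibiting two Max-Di-Cut instances on a vertex set $\{a,b,c\}$ (revealed in this order, so $u_i = c$) that produce identical Q-Type 3 histories but differ in an individual directed weight seen by BNL. Taking instance A to be the unit-weight directed triangle $a\to b\to c\to a$ and instance B its reversal $a\to c\to b\to a$ works: reversing a cycle preserves every $w_{in}$ and $w_{out}$, and every symmetric pairwise sum equals $1$ in both instances, so the complete Q-Type 3 record (for each vertex at its reveal time) coincides; yet $w(a,b)=1$ in A and $w(a,b)=0$ in B, a quantity the third BNL item reveals. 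This establishes that BNL carries strictly more information.

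The main obstacle is the converse, and specifically the requirement that the two instances be indistinguishable across the \emph{entire} Q-Type 3 history rather than at the final query alone. Because $\rho(u\mid\emptyset) = w_{out}(u)$ discloses each vertex's full out-weight the moment it is revealed (including edges to vertices not yet seen), a naive direction-swap of a single edge would alter some vertex total and be detected; using a direction-reversed cycle is what keeps all in/out degrees and all symmetric sums fixed simultaneously. The remaining care is purely bookkeeping: confirming the sign conventions in the $\bar\rho$ formula and that the no-self-loop assumption is harmless for cut instances.
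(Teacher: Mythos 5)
Your proof is correct, and the forward direction is essentially identical to the paper's: the same identities $\rho(v\mid S) = w_{out}(v) - c(S,v) - c(v,S)$ and $\bar\rho(v\mid S) = w_{in}(v) - c(v,S) - c(S,v)$, followed by the observation that the BNL data reveals all the individual weights between $v$ and previously seen vertices. For the converse both arguments live on a directed $3$-cycle, but the mechanics differ: the paper writes the six cut values as a linear system in the six directed edge weights and shows the system is rank-deficient (via $c_1+c_2+c_3 = c_4+c_5+c_6$), concluding the weights cannot be recovered from the cut function; you instead exhibit an explicit indistinguishable pair, a unit-weight cycle and its reversal. Your version buys two small things the paper's does not spell out: the alternative instance is manifestly a legal Max-Di-Cut input (non-negative weights -- a rank argument alone only guarantees a non-unique solution to the linear system, not that a second non-negative solution exists), and indistinguishability is verified for the \emph{entire} Q-Type~3 history, including the singleton marginals $\rho(u\mid\emptyset)=w_{out}(u)$ that would expose a naive single-edge flip. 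The paper's version generalizes to $n>3$ by planting a disjoint $3$-cycle; your construction extends the same way, though you should state this explicitly if the claim is read as applying to all input sizes.
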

\begin{proof}
Recall that the allowable queries under Q-Type 3 are $\rho(u|S)$, and $\bar\rho(u|S)$ for any $S \subseteq \mathcal{N}_i$. It suffices to show that these values can be calculated using the Max-Di-Cut data model. Given that $f$ is the directed cut function, $f(S)$ (\emph{resp.} $\bar{f}(S)$) can be expressed as the sum of outgoing (\emph{resp.} incoming) edge weights of vertex set $S$,

\begin{align*}
	f(S) = \sum_{s \in S} w_{out}(s) - \sum_{s,t \in S} w(s,t) \\
	\bar{f}(S) = \sum_{s \in S} w_{in}(s) - \sum_{s,t \in S} w(s,t)
\end{align*}

\noindent Then the marginal difference for a vertex $v \notin S$ with respect to $S$ is

\begin{align}
	f(S\cup \{v\}) - f(S) &= w_{out}(v) - c(S, v) - c(v,S) \label{one} \\
	\bar{f}(S \cup \{v\}) - \bar{f}(S) &= w_{in}(v) - c(v,S) - c(S,v) \label{two}
\end{align}

\noindent Since all elements of $S$ have been revealed to the algorithm previously, the data description of $v$ includes the complete edge weight information between $v$ and every vertex of $S$. Thus the cut (in either direction) between $v$ and $S$ can be directly computed. \\

\begin{figure}[htpb]
  \centering
     \includegraphics[width=0.5\textwidth]{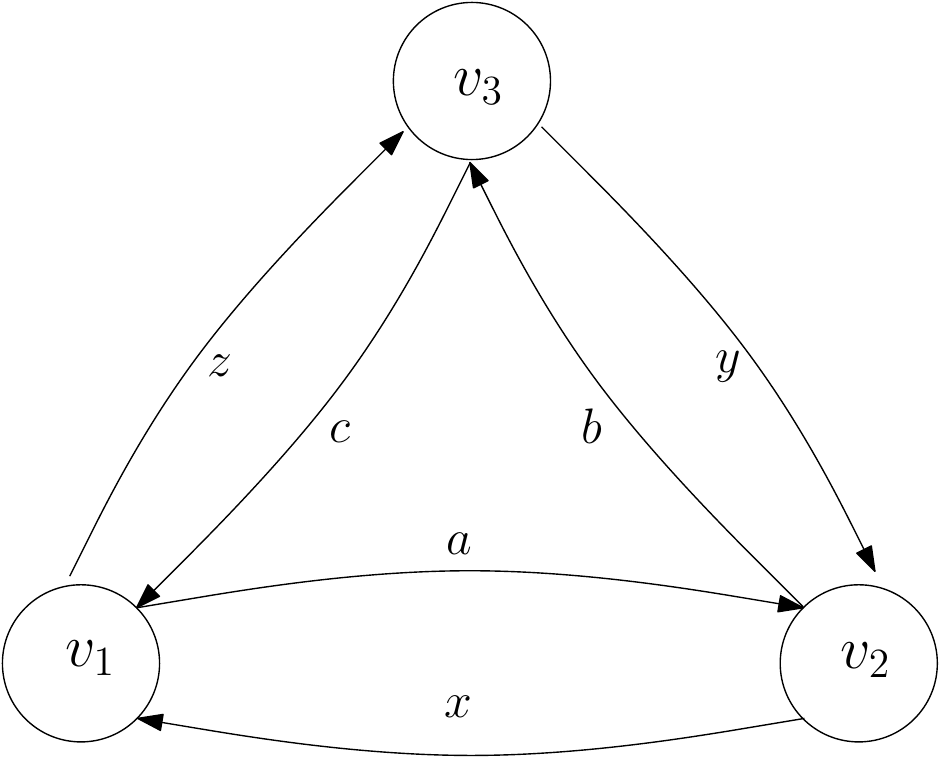}
	\caption{Directed 3 cycle with 6 edge weights.}
	\label{3cycle}
\end{figure}

For the converse, consider the directed 3-cycle $G = (V,E)$ depicted in Figure~\ref{3cycle} as counterexample. We show that even with complete disclosure of the cut function $f : 2^V \to \mathbb{R}$, we cannot reconstruct the edge weights $w : E \to \mathbb{R}$. Since $f(\emptyset) = f(V) = 0$ is trivial, there are $2^{|V|} - 2 = 6$ available linear equations. In comparison, there are $|E|=6$ unknown edge weights. Thus, with equal number of variables and linear equations,  a non-trivial solution exist if and only if the determinant is non-zero. We enumerate all subsets and their corresponding cut values in terms of the edge weights:

\begin{align*}
	c_1 = f(\{v_1\}) = a + z \\
	c_2 = f(\{v_2\}) = b + x \\
	c_3 = f(\{v_3\}) = c + y \\
	c_4 = f(\{v_1,v_2\}) = b + z\\
	c_5 = f(\{v_2,v_3\}) = c + x \\
	c_6 = f(\{v_3,v_1\}) = a + y\\
\end{align*}

It's easy to see that $c_1+c_2+c_3 = c_4 + c_5 + c_6$, implying linear dependency in the above system. Therefore, a unique solution can not be determined from the cut values alone. The argument can be extended to any input instance of size $n>3$, by constructing a graph that contains a disjoint 3-cycle.  

\end{proof}

\begin{claim}
\label{revealn}
The inapproximation of Theorem~\ref{barnoybound} holds even if the number of vertices is initially revealed to the algorithm.
\end{claim}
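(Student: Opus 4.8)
The plan is to show that the adaptive adversary underlying Theorem~\ref{barnoybound} can be made to commit to the total number of vertices $n$ before the online game begins, so that disclosing $n$ tells the algorithm nothing it could not already have assumed. I would view the Bar-Noy and Lampis lower bound as a pure adaptive-adversary argument: the hard instance is built one vertex at a time, where the weights $w_{in}(v), w_{out}(v)$ and the edges from $v$ to previously revealed vertices may depend on the accept/reject decisions made so far. This induces a decision tree whose internal branching is on the algorithm's binary choices and whose root-to-leaf paths are exactly the instances the adversary can realize, with Theorem~\ref{barnoybound} guaranteeing that every such leaf forces cut value at most $(\frac{2}{3\sqrt{3}}+o(1))\cdot\mathrm{OPT}$.

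First I would argue that this tree may be taken to have all root-to-leaf paths of the same length, i.e. every realizable instance uses the same number $n$ of vertices. If the construction of \cite{barnoy:lampis} already has this uniform-depth property there is nothing to do; otherwise I would equalize depths by \emph{padding}. Whenever a branch would terminate early, the adversary appends isolated vertices --- vertices with $w_{in}=w_{out}=0$ and no edges to any previously revealed vertex --- until the common target length $n$ is reached. An isolated vertex contributes $0$ to the cut no matter which side it is placed on, so appending such vertices changes neither $f(S)$ for any $S$ nor the optimal cut value; hence both $\mathrm{OPT}$ and the algorithm's value are preserved and the forced ratio is unchanged.

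Once every adversarial instance has exactly $n$ vertices, the adversary may announce $n$ at the outset without weakening its strategy. A deterministic online algorithm that is told $n$ is merely a strategy whose decisions may depend on $n$, but since $n$ is constant across the entire decision tree this dependence is vacuous: the algorithm is exactly an online algorithm in the model of \cite{barnoy:lampis}, against which the same adaptive adversary plays and drives the ratio to $\frac{2}{3\sqrt{3}}+\epsilon$.

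I expect the step requiring the most care to be ruling out that revealing $n$ helps through the \emph{timing} of the padding vertices: if branches of unequal real length are padded to a common total, the first appearance of an isolated vertex signals that the consequential part of the instance has ended. I would neutralize this by always placing all padding at the very end of the input. By the time the first isolated vertex is revealed, every decision on a non-isolated vertex has already been made and is irrevocable, and every decision on a padding vertex is inconsequential since it contributes $0$ to the cut regardless of side. Thus the only thing the algorithm gains from knowing $n$ is a fixed constant known equally to the adversary, and the inapproximation of Theorem~\ref{barnoybound} transfers verbatim.
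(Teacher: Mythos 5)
Your proposal is correct and uses essentially the same argument as the paper: pad every adversarial instance with isolated vertices up to the maximum size $n$ the adversary might use, announce $n$ in advance, and observe that isolated vertices change neither the algorithm's cut value nor the optimum. Your additional discussion of the decision-tree framing and the timing of the padding is a careful elaboration of the same idea rather than a different route.
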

\begin{proof}
We prove this claim using a simple padding strategy. Define $\mathcal{A}$ to be the adversary used in Theorem~\ref{barnoybound}, we will construct a modified adversary $\mathcal{A'}$ against algorithms that knows the length of the input. Let $n$ be the size of the largest graph that $\mathcal{A}$ might present, then $\mathcal{A'}$ announces $n$ as the input size to the algorithm. $\mathcal{A'}$ will simply copy the actions of $\mathcal{A}$. If $\mathcal{A}$ terminates at iteration $k \leq n$, then $\mathcal{A'}$ sets $v_{k+1},...,v_{n}$ as isolated vertices. Since an isolated vertex $v_{iso}$ has no adjacent edges, $f(S \cup  \{v_{iso}\}) = f(S)$ for any set $S$. In the last $n-k$ iterations, the value of the algorithm's solution is therefore unchanged, regardless of its decision. On the other hand, using the same optimal solution as $\mathcal{A}$ will result in the desired approximation ratio. 
\end{proof}

We now prove Theorem~\ref{onlinetheorem}:

\begin{proof}
[Proof of Theorem~\ref{onlinetheorem}]
Assume the contrary, then by Claim~\ref{info} and \ref{revealn} we can simulate such an algorithm with one under the online Max-Di-Cut model when $f$ is a cut function on a DAG. This would contradict Theorem~\ref{barnoybound}.
\end{proof}

\subsection{Reinterpreting the Online Algorithm of Bar-Noy and Lampis}

In this section we re-examine the doubling online algorithm in \cite{barnoy:lampis} for the Max-Di-Cut problem on general graphs and compare it to the deterministic version of the double-sided greedy algorithm by Buchbinder \emph{et al.}.

\begin{claim} The doubling online algorithm of \cite{barnoy:lampis} is an instantiation of the deterministic double-sided greedy algorithm of \cite{Buchbinder}.
\end{claim}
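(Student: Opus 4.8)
The plan is to specialize the online double-sided myopic algorithm (Algorithm~\ref{alg2}) to the case in which $f$ is the directed cut function of the revealed graph, and to show that the accept/reject test $a_i \ge b_i$ it performs on each vertex $v_i$ is \emph{identical} to the decision made by the doubling rule of Bar-Noy and Lampis~\cite{barnoy:lampis}. Once the two per-item rules are shown to coincide (and to tie-break identically), a straightforward induction on the order in which vertices arrive shows that the accepted set $X_i$ of the greedy algorithm agrees at every step with the ``source side'' maintained by the doubling algorithm, and $Y_i$ with its ``sink side''; hence the two algorithms return the same cut.

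The heart of the argument is a direct expansion of $a_i$ and $b_i$ using the marginal formulas for the directed cut function established in Claim~\ref{info}. Taking $S = X_{i-1}$ in \eqref{one} and $S = Y_{i-1}$ in \eqref{two} gives
\begin{align*}
a_i &= w_{out}(v_i) - c(X_{i-1},v_i) - c(v_i,X_{i-1}), \\
b_i &= w_{in}(v_i) - c(v_i,Y_{i-1}) - c(Y_{i-1},v_i).
\end{align*}
Writing $w_{out}(v_i) = c(v_i,X_{i-1}) + c(v_i,Y_{i-1}) + o_i$ and $w_{in}(v_i) = c(X_{i-1},v_i) + c(Y_{i-1},v_i) + \iota_i$, where $o_i$ and $\iota_i$ denote the out- and in-weight of $v_i$ to the as-yet-unrevealed vertices, the test $a_i \ge b_i$ simplifies after cancellation to
\[
	2\,c(v_i,Y_{i-1}) + o_i \;\ge\; 2\,c(X_{i-1},v_i) + \iota_i .
\]
This makes the ``doubling'' explicit: the cut contributions that are already realized once $v_i$ is placed --- the edges $v_i \to Y_{i-1}$ that are cut if $v_i$ is accepted, and the edges $X_{i-1} \to v_i$ that are cut if $v_i$ is rejected --- enter with weight two, while the undetermined residual degrees $o_i,\iota_i$ enter with weight one. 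This is exactly the comparison performed by the doubling algorithm of \cite{barnoy:lampis}.

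I would then match the bookkeeping of the two algorithms explicitly: by Claim~\ref{info} the directed-cut data model supplies precisely the four partial cuts $c(X_{i-1},v_i)$, $c(v_i,X_{i-1})$, $c(v_i,Y_{i-1})$, $c(Y_{i-1},v_i)$ together with the global degrees $w_{in}(v_i), w_{out}(v_i)$ (and hence $o_i,\iota_i$), so both algorithms have access to, and use, exactly the same quantities. Combined with the displayed equivalence of the decision inequalities and a common empty initialization, the induction on the arrival order closes.

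The main obstacle is the verification in the last paragraph: one must confirm that the quantities the Bar-Noy--Lampis algorithm actually tracks are exactly the realized partial cuts and residual degrees appearing above, and that its tie-breaking convention agrees with the greedy's ``accept when $a_i \ge b_i$''. Any discrepancy in tie-breaking would have to be reconciled (for instance by observing that it cannot change the produced cut value), but the algebraic reduction above shows that the two decision rules are the same function of the revealed data, which is the crux of the claim.
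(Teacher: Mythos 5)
Your proposal is correct and follows essentially the same route as the paper: both proofs reduce to the same algebraic identity equating the Buchbinder \emph{et al.} test $a_i \ge b_i$ with the Bar-Noy--Lampis doubling comparison, using the marginal formulas for the directed cut function; your intermediate inequality $2\,c(v_i,Y_{i-1}) + o_i \ge 2\,c(X_{i-1},v_i) + \iota_i$ is exactly the paper's $C_0 + P_0/2 \;\text{vs.}\; C_1 + P_1/2$ comparison scaled by two, with $C_0 = c(v_i,Y_{i-1})$, $C_1 = c(X_{i-1},v_i)$, $P_0 = o_i$, $P_1 = \iota_i$. You merely run the derivation from the greedy side toward the doubling rule rather than the reverse, and you are somewhat more explicit about the induction and tie-breaking, which the paper leaves implicit.
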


\begin{proof}

The doubling online algorithm for Max-Di-Cut can be described simply by the following comparison:

\begin{align}
\label{ineqbarnoy}
C_0(v_{i+1}) + \frac{P_0(v_{i+1})}{2} \stackrel{\text{ ?}}{<} C_1(v_{i+1}) + \frac{P_1(v_{i+1})}{2}
\end{align}

\noindent where $C_0(v)$ (\emph{resp.} $C_1(v)$) is the certain payoff of accepting (\emph{resp.} rejecting) $v$, while $P_0(v)$ (\emph{resp.} $P_1(v)$) is the potential payoff of accepting (\emph{resp.} rejecting) $v$. Then vertex $v_{i+1}$ is accepted if Inequality~\ref{ineqbarnoy} holds, and rejected otherwise. On the other hand, the double-sided greedy algorithm of Buchbinder \emph{et al.} compares the incremental gain as follows

\begin{align}
\label{ineqbuch}
f(X_i \cup \{v_{i+1}\}) - f(X_i) \stackrel{\text{ ?}}{<} \bar{f}(Y_i \cup \{v_{i+1}\}) - \bar{f}(Y_i)
\end{align}

To show that the two algorithms are in fact the same, we prove that (\ref{ineqbarnoy}) and (\ref{ineqbuch}) are equivalent. Express the terms in (\ref{ineqbarnoy}) explicitly as follows

\begin{align*}
	C_0(v_{i+1}) &= c(v_{i+1}, Y_i) \\
	C_1(v_{i+1}) &= c(X_i, v_{i+1}) \\
	P_0(v_{i+1}) &= w_{out}(v) - \big[ c(v_{i+1},Y_i) + c(v_{i+1}, X_i) \big] \\
	P_1(v_{i+1}) &= w_{in}(v) - \big[ c(Y_i, v_{i+1}) + c(X_i, v_{i+1}) \big]
\end{align*}

\noindent Then (\ref{ineqbarnoy}) becomes

\begin{align}
	c(v_{i+1}, Y_i) +& \frac{1}{2} w_{out}(v) - \frac{1}{2}  \big[ c(v_{i+1},Y_i) + c(v_{i+1}, X_i) \big]  \notag\\
	&\stackrel{\text{ ?}}{<}   c(X_i, v_{i+1}) + \frac{1}{2}w_{in}(v) - \frac{1}{2}\big[ c(Y_i, v_{i+1}) + c(X_i, v_{i+1}) \big] \notag\\
	\Leftrightarrow w_{out}(v) - c(&v_{i+1},X_i) - c(X_i, v_{i+1}) \notag\\
	&\stackrel{\text{ ?}}{<} w_{in}(v) - c(Y_i, v_{i+1}) - c(v_{i+1},Y_i)
\end{align}

\noindent Replacing the two sides of the inequality by that of Equation~\ref{one} and \ref{two} (by substituting $X_i$ and $Y_i$ as $S$, and $v_{i+1}$ as $v$) yields Inequality~\ref{ineqbuch}.

\end{proof}

As a corollary, the doubling online algorithm of Bar-Noy and Lampis can be simulated by an online double-sided myopic algorithm. Furthermore, it follows from Claim~\ref{reduction} that Q-Type 1 relevant oracle access is sufficient for this simulation. In contrast, Claim~\ref{info} indicates that the .385 online inapproximation extends to double-sided myopic algorithms under Q-Type 3 oracle constraints. 

\section{0.428 and 0.450 Inapproximations for Fixed Priority Algorithms}
\label{fixed}

Our lower bound argument for fixed priority algorithms is achieved by constructing a hard input instance using linear programming. The solution of the LP gives the complete mapping of the objective submodular function $f: 2^\mathcal{N} \to \mathbb{R}$, where $\mathcal{N}$ is a small finite ground set. In the construction of $f$, we require certain items to be indistinguishable to the algorithm, granting the adversary control over the input ordering. The drawback in this computer assisted argument is that $f$ may not have a succinct representation like a cut function or a coverage function. Therefore, we first establish the intuition behind the adversarial strategy on a small Max-Di-Cut example, albeit with a worse lower bound.

\subsection{Adversarial Strategy on Max-Di-Cut}
Consider $G=(V,E)$, a directed 6-cycle with unit edge weights, and $f$ the directed cut function on $G$. Since a fixed order priority algorithm must determine a total ordering before the input is revealed, the priority of an input item $i$ can only depend on $f(i)$ and $\bar{f}(i)$ - corresponding to the total weights of out-edges and in-edges of vertex $i$, respectively. Clearly, $f(\emptyset)=\bar{f}(\emptyset)=0$. Since $G$ is regular, $\rho(i|\emptyset) = f(i) = f(j) = \rho(j | \emptyset)$, $\bar{\rho}(i|\emptyset) = \bar{f}(i) = \bar{f}(j) = \bar{\rho}(j | \emptyset)$ for all $i,j \in V$. Consequently, every input item in this instance has identical ordering priority; and due to the algorithm being deterministic, the adversary can choose any permutation as a feasible input ordering. 

Denote by $k$ the number of initial steps taken by the algorithm that the adversary will anticipate. In other words, for $2^k$ possible partial solutions, the adversary prepares an input ordering (consistent with the algorithm's queries in these $k$ steps) such that any extendible solution has a bad approximation ratio.

\begin{theorem} For the unweighted Max-Di-Cut problem, no fixed order Q-Type 3 double-sided myopic algorithm can achieve an approximation ratio greater than $\frac{2}{3}$.
\end{theorem}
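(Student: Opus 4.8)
The plan is to exhibit one hard instance---the directed $6$-cycle $G$ with unit weights---and an adversary that forces any fixed order Q-Type 3 double-sided myopic algorithm to output a cut of value at most $2$. First I would record the two arithmetic facts that drive everything. The optimum is $\mathrm{OPT}=3$: identifying a subset $S$ with its cyclic indicator string $x_1\cdots x_6$, the cut value $f(S)$ equals the number of $1\!\to\!0$ descents around the cycle, which for a length-$6$ cyclic binary string is at most $3$ and is attained by the alternating string. Moreover, for any coloring of the cycle by $X$ (accepted) and $Y$ (rejected), if $t$ denotes the number of $X\!\to\!Y$ edges then $t$ also equals the number of $Y\!\to\!X$ edges, so the number of monochromatic edges is $6-2t$. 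Hence forcing at least two monochromatic edges is exactly equivalent to forcing $t\le 2$, i.e. a cut of value at most $2$ and a ratio of at most $\frac{2}{3}$.

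Next I would set up the adversary game. Since every vertex has $w_{out}=w_{in}=1$, we have $\rho(u\mid\emptyset)=\bar\rho(u\mid\emptyset)=1$ for all $u$, so every vertex receives the same priority under any injective $\pi$; the adversary may therefore present the vertices in any order and resolve all ties. Under Q-Type 3, presenting a vertex reveals exactly its edges to the already-presented vertices and nothing about future adjacencies, so the adversary's only commitment is that the revealed edges extend to some single directed $6$-cycle, and its leverage is complete control over the still-undetermined part of the cycle. Following the $k$-step scheme above, the adversary presents its first vertices so that the algorithm must color them before the relevant cyclic adjacencies are fixed, and it branches on the $2^{k}$ possible accept/reject patterns of these vertices; for each branch it will specify a completion of the remaining edges into a $6$-cycle that leaves at least two edges monochromatic no matter how the algorithm colors the remaining vertices.

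The hard part is engineering these completions, and here two structural constraints must be respected simultaneously. A directed $6$-cycle has independence number $3$, and a size-$3$ independent set is exactly one side of its bipartition; so the adversary cannot present three mutually non-adjacent vertices without disclosing the full bipartition, which would let the algorithm color it alternately and reach $\mathrm{OPT}=3$. At the same time, every edge between two presented vertices is revealed as soon as the later endpoint appears, so the adversary can defer only the two edges incident to the last-presented vertex. The design therefore cannot rely on reshuffling the cut at the end; instead the adversary must build the cycle as a directed path that grows adaptively, attaching each new vertex in the orientation and position that is worst given the colors chosen so far, and reserving the last vertex's two edges to seal in the second defect. I expect the crux of the proof to be a finite case analysis over the branches showing that in each one this adaptive path-growth does create two monochromatic edges; this is precisely the step whose hand-verification becomes unwieldy for richer instances and motivates the computer-assisted LP constructions used for the sharper bounds of Theorems~\ref{fixedtheorem}--\ref{adaptivetheorem}.
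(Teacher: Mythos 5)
Your setup is sound: the directed $6$-cycle with unit weights is exactly the paper's instance, $OPT=3$ is correct, and your observation that a cut of value at most $2$ is equivalent to leaving at least two monochromatic edges is a correct (though not strictly needed) reformulation, since the only subsets of cut value $3$ are the two alternating sets $\{v_1,v_3,v_5\}$ and $\{v_2,v_4,v_6\}$. The genuine gap is that the proof's actual content --- the adversary strategy and the verification that it succeeds on every branch --- is never supplied. You close by saying you ``expect the crux to be a finite case analysis'' over an adaptively grown path and anticipate it being unwieldy; that case analysis \emph{is} the theorem, and without it you have a plan rather than a proof. Worse, the plan as framed manufactures difficulties (the independence-number constraint, deferring the last vertex's two edges, ``sealing in the second defect'') that the correct argument never encounters, because the adversary does not need to keep playing until the cycle is complete.

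The paper's proof needs only two rounds. Fix the cycle $v_1\to v_2\to\cdots\to v_6\to v_1$; regularity makes all priorities equal, so the adversary controls the order. After the algorithm decides $v_1$, the adversary presents a second item consistent with being either $v_3$ or $v_4$: both are non-adjacent to $v_1$, so under Q-Type 3 their marginals satisfy $\rho(\cdot\mid S)=\bar\rho(\cdot\mid S)=1$ for every $S\subseteq\{v_1\}$ and they are indistinguishable. Since $v_3$ lies in one optimal set and $v_4$ in the other, whichever irrevocable decision the algorithm makes on this second item, the adversary resolves its identity so that the resulting accept/reject pattern is inconsistent with \emph{both} optimal sets (e.g.\ if $v_1$ was accepted and the second item is accepted, declare it to be $v_4$; if rejected, declare it to be $v_3$). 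Every remaining extension then has cut value at most $2$. To repair your write-up, replace the open-ended adaptive path-growth with this single indistinguishable pair and the resulting two-case (really four-case) check, which is short enough to do by hand.
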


\begin{proof} Let $v_1,...,v_6$ be the vertices along a directed 6-cycle $G$ with unit edge weights. Clearly, $OPT=3$ is achieved by $\{v_1,v_3,v_5\}$ or $\{v_2,v_4,v_6\}$. The regularity of $G$ ensures that $\pi(v_1)=...=\pi(v_6)$ for any $\pi$, allowing the adversary to specify any input ordering. Suppose the algorithm accepts (\emph{resp.} rejects) $v_1$ in the first step, the adversary fixes $u_{i_2} = \{v_3,v_4\}$ as the next set of possible inputs in the sequence. At this point, $v_3$ and $v_4$ are indistinguishable to the algorithm, since $\rho(v_3 | S) = \rho(v_4 | S) = 1$ and $\bar\rho(v_3|S) = \bar\rho(v_4|S) = 1$ for any $S \subseteq \mathcal{N}_1 = \{v_1\}$. If the algorithm accepts $u_{i_2}$, then the adversary sets $u_{i_2} = v_4$ (\emph{resp.} $u_{i_2}=v_3$); otherwise it sets $u_{i_2} = v_3$ (\emph{resp.} $u_{i_2} = v_4$). Both cases contradict the optimal solutions, and the maximum cut value is now at most 2.

\end{proof}


Although this inapproximation bound does not match the currently best deterministic greedy $\frac{1}{3}$ approximation algorithm due to Buchbinder \emph{et al.} and Bar-Noy and Lampis for Max-Di-Cut, the inapproximation shows that the SDP based 0.828 approximation cannot be realized by fixed-order deterministic double sided myopic algorithms in our model.

\subsection{LP Construction for Q-Type 2}

We now generalize the adversarial strategy described in the previous section to obtain a sharper inapproximability ratio for the general unconstrained submodular maximization problem. First, we consider the fixed priority setting under query type 2.

\begin{lemma} \label{conditionslemma}
No fixed Q-Type 2 double-sided myopic algorithm can achieve an approximation ratio greater than $\frac{1}{c}$ if there exists a (non-negative) submodular function $f$ with the following conditions

\begin{enumerate}
	\item[{\bf Cond. 1}] $f(\{u\}) = f(\{v\}), \bar{f}(\{u\}) = \bar{f}(\{v\}), \forall u,v \in \mathcal{N}$. This imposes initial indistinguishably in the input set.
	\item[{\bf Cond. 2}] There exist subsets $A = \{a_1,...,a_k\} \subseteq \mathcal{N}$, $R = \{r_1,...,r_k\} \subseteq \mathcal{N}$, $A \cap R = \emptyset$, such that for every $1 \leq i < j \leq k$ and every $\mathcal{C}_i \in \{a_1,r_1\} \times \{a_2,r_2\} \times ... \times \{a_i,r_i\}$,
\begin{align*}
	f((\mathcal{C}_i\cap A) \cup \{a_{j}\}) =  f((\mathcal{C}_i\cap A) \cup \{r_{j}\}) \\
	\bar{f}((\mathcal{C}_i\cap R) \cup \{a_{j}\}) = \bar{f}((\mathcal{C}_i\cap R) \cup \{r_{j}\}) 
\end{align*}
	Although $\mathcal{C}_i$ is defined as an $i$-vector, we abuse notation and treat $\mathcal{C}_i$ as a set of size $i$. Semantically, $A$ (\emph{resp.} $R$) is the set of items that the algorithm is tricked into accepting (\emph{resp.} rejecting). This is achievable if $a_{j},r_{j}$ are indistinguishable to the algorithm at round $j$, in the Q-Type 2 restricted oracle model.

	\item[{\bf Cond. 3}] For every $\mathcal{C}_k \in \{a_1,r_1\} \times \{a_2,r_2\} \times ... \times \{a_k,r_k\}$, any solution $S \subseteq \mathcal{N}$ such that $\mathcal{C}_k \cap A \subseteq S$ and $S \cap \mathcal{C}_k \cap R  = \emptyset$ (\emph{i.e.} $S$ is an extension of $\mathcal{C}_k$) must have $f(S) \leq 1$. This ensures the algorithm does not recover after making $k$ decisions.
	\item[{\bf Cond. 4}] There exists a set $S^* \in 2^\mathcal{N}$ such that $f(S^*) \geq c$. 
\end{enumerate}

\end{lemma}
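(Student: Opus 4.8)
The plan is to run the standard priority nemesis argument: fix an arbitrary fixed priority Q-Type 2 double-sided myopic algorithm, and use the four conditions to supply the adversary with enough symmetry to steer the algorithm into a configuration governed by \textbf{Cond. 3}. First I would dispose of the ordering. Since the algorithm is \emph{fixed} priority, it commits to an injective $\pi$ before any query is made, and the priority of an item $u$ is $\pi(\rho(u|\emptyset),\bar\rho(u|\emptyset)) = \pi(f(\{u\}),\bar f(\{u\}))$. By \textbf{Cond. 1} this pair is identical for every $u \in \mathcal{N}$, so all items receive equal priority and the adversary is free to impose any order on them.

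The core is an inductive nemesis strategy over the first $k$ rounds. I would maintain the invariant that after round $j-1$ the adversary has committed a configuration $\mathcal{C}_{j-1} \in \{a_1,r_1\}\times\cdots\times\{a_{j-1},r_{j-1}\}$, with accepted set $X_{j-1} = \mathcal{C}_{j-1}\cap A$ and rejected set $Y_{j-1} = \mathcal{C}_{j-1}\cap R$, consistent with every query made so far. At round $j$ the adversary presents an item whose identity is deliberately left ambiguous between $a_j$ and $r_j$. The decisive step is to show this ambiguity survives every Q-Type 2 query: the algorithm may learn only $\rho(u_j|X_\ell)$ and $\bar\rho(u_j|Y_\ell)$ for $\ell<j$, and I would verify that each of these values is unchanged when $u_j$ is swapped between $a_j$ and $r_j$. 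For $\ell=0$ this is \textbf{Cond. 1}; for $1\le\ell<j$ it is exactly the two equalities of \textbf{Cond. 2} instantiated with $i=\ell$ and $\mathcal{C}_i=\mathcal{C}_\ell$, recalling $X_\ell=\mathcal{C}_\ell\cap A$ and $Y_\ell=\mathcal{C}_\ell\cap R$.

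Because the entire internal memory (all prior decisions together with every query outcome) is then literally identical under the two hypotheses, the deterministic algorithm issues the same irrevocable decision in both cases. The adversary commits the presented item to $a_j$ if that decision is \emph{accept} and to $r_j$ if it is \emph{reject}, extending the configuration to $\mathcal{C}_j$ and preserving the invariant; this is legitimate precisely because \textbf{Cond. 2} holds for \emph{every} configuration, so resolving the current ambiguity cannot jeopardize future indistinguishability. After $k$ rounds the realized $\mathcal{C}_k$ has accepted set $\mathcal{C}_k\cap A$ and rejected set $\mathcal{C}_k\cap R$, and since all $k$ decisions are irrevocable, any set $S$ the algorithm ultimately returns satisfies $\mathcal{C}_k\cap A\subseteq S$ and $S\cap(\mathcal{C}_k\cap R)=\emptyset$ (decisions on items presented after round $k$ are immaterial to this containment). \textbf{Cond. 3} then forces $f(S)\le 1$, while \textbf{Cond. 4} gives $OPT\ge f(S^*)\ge c$, so the approximation ratio is at most $1/c$.

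The hard part will be the invariant-preservation step rather than the final accounting. One must argue carefully that the adaptive, decision-dependent commitment produces a single well-defined input consistent with the \emph{entire} transcript of Q-Type 2 queries across all rounds, not merely round by round; this is where the \emph{double-sided} form of \textbf{Cond. 2} is essential, since neutralizing Q-Type 2 requires the $f$-equality on the accepted side and the $\bar f$-equality on the rejected side to hold \emph{simultaneously} for all intermediate contexts $\mathcal{C}_\ell$, so that $a_j$ and $r_j$ stay interchangeable no matter how the earlier ambiguities were resolved.
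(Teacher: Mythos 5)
Your proposal is correct and follows essentially the same argument as the paper's proof: Cond.~1 neutralizes the fixed priority ordering, an induction using Cond.~2 keeps $a_j$ and $r_j$ indistinguishable under all Q-Type 2 queries so the adversary can commit the presented item according to the algorithm's decision, and Cond.~3 and Cond.~4 give the final $\frac{1}{c}$ bound. Your closing remark about consistency of the full transcript is a slightly more explicit treatment of a point the paper leaves implicit, but it is the same proof.
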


\begin{proof}
	By {\bf Cond. 1}, any permutation of the input items can be chosen by the adversary, since the algorithm assigns identical priority values to all items. The adversary starts by forcing the algorithm to only accept items in $A$, or reject items in $R$ in the first $k$ rounds. 
That is, if the algorithm accepts (\emph{resp.} rejects) in the $j^{th}$ round (for $j\leq k$), then the adversary chooses an ordering of $\mathcal{N}$ such that $a_j$ (\emph{resp.} $r_j$) is the $j^{th}$ item and $r_j$ (\emph{resp.} $a_j$) is the $l^{th}$ item, for some arbitrary $k < l \leq n$. To clarify, the adversary will choose either $a_j$ or $r_j$
as the $j^{th}$ item and then postpone the
item not chosen until after the first $k$ items have been decided upon. By induction, assume this is achievable in the first $j < k$ steps. Then there is a set of choices $\mathcal{C}_j \in \{a_1,r_1\} \times ... \times \{a_j,r_j\}$ such that $\mathcal{C}_j \cap A = X_j$ and $\mathcal{C}_j \cap R = Y_j$, where $X_j$ (\emph{resp.} $Y_j$) is the set of accepted (\emph{resp.} rejected) items so far. Consider what the algorithm knows about $u_{j+1} \in \{a_{j+1},r_{j+1}\}$ at this point. Using Q-Type 2 queries, the algorithm may learn the marginal difference in $f$ (\emph{resp.} $\bar{f}$) for $u_{j+1}$ w.r.t. all $X_i$ (\emph{resp.} $Y_i$) for $i \leq j$. However, due to {\bf Cond. 2}, the following must hold
\begin{align*}
	\rho(a_{j+1} | X_0) = \rho(r_{j+1}| X_0)&, \bar{\rho}(a_{j+1}| Y_0) = \bar{\rho}(r_{j+1}| Y_0) \\
	\vdots \\
	\rho(a_{j+1} | X_j) = \rho(r_{j+1} | X_j)&, \bar{\rho}(a_{j+1} | Y_j) = \bar{\rho}(r_{j+1}| Y_j )
\end{align*}

Since this is the only information available to the algorithm, the item $a_{j+1}$ and $r_{j+1}$ cannot be distinguished by the algorithm. Therefore if the algorithm accepts, then the adversary chooses $a_{j+1}$ as the $j+1^{st}$ input item, and $r_{j+1}$ otherwise. The base case follows from the same argument, as $C_0 = X_0 = Y_0 = \emptyset$. 

After $k$ steps, the algorithm constructs a partial solution (along with partial rejection) corresponding to some $\mathcal{C}_k \in \{a_1,r_1\} \times ... \times \{a_k,r_k\}$. By {\bf Cond. 3}, any complete solution that can be extended now has a value bounded by 1. Thus, with the optimum solution having a value of at least $c$ by {\bf Cond. 4}, we have forced the algorithm to return a solution with at most $\frac{1}{c}OPT$. 
\end{proof}

The conditions in Lemma~\ref{conditionslemma} can be expressed as a system of linear inequalities, which we formulate below as a linear program that maximizes the value of $c$. For fixed $n$ and $k$, we work with a ground set of size $n$ in the form $\mathcal{N} = \{s_1,...,s_{\lfloor \frac{n}{2} \rfloor}, o_1,...,o_{\lceil \frac{n}{2} \rceil} \}$, and designate the optimum set as $O = \{o_1,...,o_{\lceil \frac{n}{2} \rceil}\}$. Set $A = \{s_1,...,s_k\}$ and $R = \{o_1,...,o_k\}$ so as to deter the algorithm from the optimum set as much as possible. 
For every possible subset $S \subseteq \mathcal{N}$, we associate with it an LP variable $x_S$. As this construction entails exponentially many variables, this is indeed only feasible in practice when restricted to a small ground set. Semantically, the solution to each $x_S$ corresponds to the value of $f$ on $S$ - and as such we abuse notation and refer to $f(S)$ directly as the LP variable for $S$. For interpretability, we may refer to the variable $\bar{f}(S_i)$ as alias for the variable $f(\mathcal{N} \setminus S_i)$ (following the definition of $\bar{f}$). Define the linear program as follows: \\

\begin{mycapequ}[!ht]
\caption{\uline{LP for Q-Type 2 Fixed Priority}}
{\bf Objective}
\begin{align}
\max \quad
& f(\{o_1,...,o_ {\lceil \frac{n}{2} \rceil}\}) \label{objective}\tag{obj}
\end{align}
{\bf Constraints}
\begin{align}
f(S \cup \{v\}) + f(S \cup \{u\}) - f(S \cup \{v,u\}) &\geq  f(S)
&& \forall S \subseteq \mathcal{N}, (v,u) \in \mathcal{N}\setminus S\label{c1}\tag{C1} \\ \notag
f(\{v\}) - f(\{u\}) &=0, \\
\bar{f}(\{v\}) -\bar{f}(\{u\}) &= 0
&& \forall v,u \in \mathcal{N} \label{c2}\tag{C2} \\ \notag
f((\mathcal{C}_i \cap A)\cup \{a_{j}\}) -f((\mathcal{C}_i\cap A)\cup \{r_{j}\}) &= 0,\\
\bar{f}((\mathcal{C}_i \cap R) \cup \{a_{j}\}) - \bar{f}((\mathcal{C}_i \cap R) \cup \{r_{j}\}) &= 0 
&& \forall \mathcal{C}_i {\in} \{a_1,r_1\}{\times}...{\times}\{a_i,r_i\}, \notag\\ 
&&&0 < i<j \leq k \label{c3}\tag{C3} \\
f(S) &\leq 1 
&& \forall S \text{ s.t. } \exists \mathcal{C}_k, \mathcal{C}_k\cap A \subseteq S \notag \\
&&&\wedge S \cap \mathcal{C}_k \cap R = \emptyset \label{c4}\tag{C4} \\
f(S) &\geq 0 
&& \forall S \label{c5}\tag{C5} \\
f(\emptyset) & = 0 \label{c6}\tag{C6}
\end{align}
\end{mycapequ}

Inequality (\ref{c1}) is a necessary and sufficient condition for submodularity \cite{nemhauser}, and (\ref{c5}) and (\ref{c6}) constrain $f$ to be non-negative and normalized. The remaining constraints correspond to conditions 1-3 of Lemma~\ref{conditionslemma}. If the LP is feasible, then the objective value in (\ref{objective}) is a lower bound for $c$ in condition 4.

\begin{proof}[Proof of Theorem~\ref{fixedtheorem}]
	Running the LP with $n=8$ and $k=4$, a feasible solution is found with objective value of 2.3333. By Lemma~\ref{conditionslemma}, this demonstrates a lower bound of $\frac{1}{2.3333} \approx 0.428$. 
\end{proof}

\subsection{Extending to Q-Type 3}
Notice that in the fixed case, the difference between oracle query types is only reflected in the decision step, since the algorithm does not gain additional power in the ordering step. In other words, we can apply the same adversarial strategy to a stronger Q-Type as long as $a_i$ and $r_i$ are still indistinguishable. The following lemma extends the inapproximation result to fixed priority algorithms in the Q-Type 3 model by tightening the indistinguishability constraints. 

\begin{lemma} No Q-Type 3 fixed double-sided myopic algorithm can achieve an approximation ratio greater than $\frac{1}{c}$ if there exists a (non-negative) submodular function $f$ that satisfies {\bf Cond. 1-4} of Lemma~\ref{conditionslemma}, as well as the following:

\begin{enumerate}

\item[{\bf Cond. 2$\dagger$}] There exist subsets $A = \{a_1,...,a_k\} \subseteq \mathcal{N}$, $R = \{r_1,...,r_k\} \subseteq \mathcal{N}$, $A \cap R = \emptyset$, such that for every $0 < i < k$ and every $\mathcal{C}_i \in \{a_1,r_1\} \times \{a_2,r_2\} \times ... \times \{a_i,r_i\}$ and every subset $S \subseteq \mathcal{C}_i$,

\begin{align*}
	f(S \cup \{a_{i+1}\}) =  f(S \cup \{r_{i+1}\}) \\
	\bar{f}(S\cup \{a_{i+1}\}) = \bar{f}(S \cup \{r_{i+1}\}) 
\end{align*}

To decipher this statement, recall that each $\mathcal{C}_i$ corresponds to one of the valid input configurations in the first $i$ rounds. Then $\mathcal{N}_i = \mathcal{C}_i$ and the equality constraints simply forces $a_{i+1}$ and $r_{i+1}$ to have the same marginal differences that the algorithm is allowed to query under Q-Type 3. Notice also that this condition subsumes \textbf{Cond. 2} of the previous section, when we take $S = \mathcal{C}_i \cap A$ and $S = \mathcal{C}_i \cap R$. 

\end{enumerate}
\end{lemma}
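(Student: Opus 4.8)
The plan is to reuse the adversarial strategy from the proof of Lemma~\ref{conditionslemma} almost verbatim, changing only the step that establishes the indistinguishability of $a_{i+1}$ and $r_{i+1}$ at each round so that it survives the stronger Q-Type 3 oracle. Because Cond. 2$\dagger$ implies Cond. 2 (restrict $S$ to $\mathcal{C}_i \cap A$ and to $\mathcal{C}_i \cap R$), the hypothesis of this lemma implies all four conditions of Lemma~\ref{conditionslemma}. In particular Cond. 1 still lets the adversary fix any input ordering, since in the fixed priority model $\pi$ depends only on $\rho(u\mid\emptyset)=f(\{u\})$ and $\bar\rho(u\mid\emptyset)=\bar f(\{u\})$, which are constant across $\mathcal{N}$; and Cond. 3 together with Cond. 4 still bound the final ratio by $\frac{1}{c}$ once the first $k$ decisions have been forced.

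First I would set up the same induction as before: after $j$ rounds the revealed set $\mathcal{N}_j$ equals some configuration $\mathcal{C}_j \in \{a_1,r_1\}\times\cdots\times\{a_j,r_j\}$, with $\mathcal{C}_j \cap A = X_j$ and $\mathcal{C}_j \cap R = Y_j$. The base case $\mathcal{C}_0 = \emptyset$ is immediate. For the inductive step the adversary presents the next item $u_{j+1}$, which it intends to label as either $a_{j+1}$ or $r_{j+1}$, and we must argue the algorithm cannot tell which it is.

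The crux of the argument, and the only place the proof departs from that of Lemma~\ref{conditionslemma}, is showing that under Q-Type 3 the algorithm still cannot separate $a_{j+1}$ from $r_{j+1}$. Under Q-Type 3 the algorithm may learn $\rho(u_{j+1}\mid S)$ and $\bar\rho(u_{j+1}\mid S)$ for \emph{every} $S \subseteq \mathcal{N}_j = \mathcal{C}_j$, not merely for the accepted and rejected chains $X_0,\dots,X_j$ and $Y_0,\dots,Y_j$ exposed by Q-Type 2. But this is exactly the family of marginals that Cond. 2$\dagger$ pins down: for every $S \subseteq \mathcal{C}_j$ we have $f(S\cup\{a_{j+1}\}) = f(S\cup\{r_{j+1}\})$ and $\bar f(S\cup\{a_{j+1}\}) = \bar f(S\cup\{r_{j+1}\})$, hence $\rho(a_{j+1}\mid S) = \rho(r_{j+1}\mid S)$ and likewise for $\bar\rho$. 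Thus the internal memory (decisions, priorities, and all permissible query outcomes) is identical regardless of which of the two items is presented, so the deterministic algorithm makes the same accept/reject decision in both cases. The adversary then labels $u_{j+1}$ as $a_{j+1}$ if the algorithm accepts and as $r_{j+1}$ if it rejects, postponing the unused item past round $k$; this preserves $\mathcal{N}_{j+1} = \mathcal{C}_{j+1}$ and completes the induction.

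Finally I would close exactly as before: after $k$ rounds the algorithm has committed to some $\mathcal{C}_k$, so by Cond. 3 every solution it can still extend to has value at most $1$, while by Cond. 4 there is an $S^*$ with $f(S^*) \geq c$, giving an approximation ratio of at most $\frac{1}{c}$. I expect the main obstacle to be phrasing the indistinguishability step at precisely the right granularity, namely verifying that the family of subsets quantified in Cond. 2$\dagger$ coincides with the family $\{S : S \subseteq \mathcal{N}_j\}$ of queries the Q-Type 3 oracle exposes at round $j+1$, so that no admissible Q-Type 3 query can distinguish $a_{j+1}$ from $r_{j+1}$.
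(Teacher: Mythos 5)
Your proposal is correct and follows essentially the same route as the paper's proof: both arguments reuse the adversarial induction from Lemma~\ref{conditionslemma}, observing that at round $i+1$ the revealed set $\mathcal{N}_i$ equals the configuration $\mathcal{C}_i$, so the Q-Type 3 queries on the next item are exactly the marginals with respect to subsets of $\mathcal{C}_i$ that \textbf{Cond. 2$\dagger$} forces to coincide for $a_{i+1}$ and $r_{i+1}$. The concluding step via \textbf{Cond. 3} and \textbf{Cond. 4} is likewise identical.
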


\begin{proof}
Regardless of the relevant query type, a fixed algorithm must decide on a priority function based only on the singleton values. Therefore, as in the previous case, \textbf{Cond. 1} allows the adversary to introduce any input ordering. To adapt the proof for Lemma~\ref{conditionslemma} to Q-Type 3, it suffices to show that the new conditions are strong enough to push the induction step. Specifically, we would like $a_{i+1}$ and $r_{i+1}$ to have the same marginal descriptions at iteration $i+1$ assuming the adversary has been successful so far. This is easy to show, since any valid input corresponds to some $\mathcal{C}_i  \in \{a_1,r_1\} \times \{a_2,r_2\} \times ... \times \{a_i,r_i\}$, and this is precisely the set of items that the algorithm has seen so far. Under Q-Type 3, the algorithm may query the marginal value of the $i+1^{st}$ item with respect to any subset of $\mathcal{C}_i$ --- and so from \textbf{Cond. 2$\dagger$}, $a_{i+1}$ and $r_{i+1}$ are indistinguishable. The rest of the proof then follows identically.
\end{proof}

\begin{mycapequ}[!ht]
\caption{\uline{LP for Q-Type 3 Fixed Priority}}
\setcounter{equation}{0}
{\bf Objective}
\begin{align*}
\max \quad
& f(\{o_1,...,o_ {\lceil \frac{n}{2} \rceil}\})
\end{align*}
{\bf Constraints}
\begin{align*}
f(S \cup \{v\}) + f(S \cup \{u\}) - f(S \cup \{v,u\}) - f(S) &\geq 0
&& \forall S \subseteq \mathcal{N}, (v,u) \in \mathcal{N}\setminus S \\ \notag
f(S \cup \{a_{i+1}\}) - f(S \cup \{r_{i+1}\})  &= 0, \notag\\ \notag
\bar{f}(S\cup \{a_{i+1}\}) - \bar{f}(S \cup \{r_{i+1}\})  &= 0 
&& \forall S \subseteq \mathcal{C}_i, \\ 
&& &\forall \mathcal{C}_i \in \{a_1,r_1\}\times...\times\{a_i,r_i\}, \\ 
&& &0\leq i < k\\
f(S) &\leq 1 
&& \forall S \text{ s.t. } \exists \mathcal{C}_k, \mathcal{C}_k\cap A \subseteq S \notag \\
&&&\wedge S \cap \mathcal{C}_k \cap R = \emptyset \notag \\
f(S) &\geq 0 
&& \forall S \\
f(\emptyset) & = 0 \notag
\end{align*}
\end{mycapequ}

\begin{proof}[Proof of Theorem~\ref{fixedtheoremtype3}]
Running the LP with $n=8$ and $k=4$, a feasible solution is found with objective value of 2.2222. By Lemma~\ref{conditionslemma}, this demonstrates a lower bound of $\frac{1}{2.2222} \approx 0.450$. 
\end{proof}

\section{A 0.432 Inapproximation for Adaptive Priority Algorithms}
\label{adaptive}

In this section we extend the inapproximation to adaptive priority algorithms by applying a more restricted adversarial construction of that from the previous section. Surprisingly, this resulted in only a slight loss of tightness in the lower bound. 

\begin{lemma} No Q-Type 2 adaptive double-sided myopic algorithm can achieve an approximation ratio greater than $\frac{1}{c}$ if there exists a (non-negative) submodular function $f$ that satisfies {\bf Cond. 1-4} of Lemma~\ref{conditionslemma}, as well as the following:

\begin{enumerate}

\item[{\bf Cond. 2*}] There exist subsets $A = \{a_1,...,a_k\} \subseteq \mathcal{N}$, $R = \{r_1,...,r_k\} \subseteq \mathcal{N}$, $A \cap R = \emptyset$, such that for every $i < k$ and every $\mathcal{C}_i \in \{a_1,r_1\} \times \{a_2,r_2\} \times ... \times \{a_i,r_i\}$ and all pairs $v,u \in \mathcal{N} \setminus \mathcal{C}_i$,

\begin{align*}
	f((\mathcal{C}_i\cap A) \cup \{u\}) =  f((\mathcal{C}_i\cap A) \cup \{v\}) \\
	\bar{f}((\mathcal{C}_i\cap R) \cup \{u\}) = \bar{f}((\mathcal{C}_i\cap R) \cup \{v\}) 
\end{align*}

This condition subsumes both \textbf{Cond. 1} and {\bf Cond. 2} of Lemma~\ref{conditionslemma}, by requiring all unfixed items to have identical marginal descriptions - thus nullifying the benefit of adaptive ordering since the entire input will always be indistinguishable in the first $k$ rounds.

\end{enumerate}
\end{lemma}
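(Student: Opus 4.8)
The plan is to mirror the proof of Lemma~\ref{conditionslemma}, keeping the inductive ``forcing'' skeleton intact and strengthening only the single place where the adaptive model genuinely differs from the fixed one, namely the ordering step. In the fixed setting \textbf{Cond.~1} alone handed the adversary complete control of the order, because a fixed priority function sees only the singleton marginals $(\rho(u\mid\emptyset),\bar\rho(u\mid\emptyset))$. An adaptive algorithm instead recomputes its priority function $\pi$ after each decision and feeds it the whole vector $Q(u)$ of marginals that Q-Type~2 exposes, i.e.\ $\rho(u\mid X_j)$ and $\bar\rho(u\mid Y_j)$ for every already-attained partial solution. So the task is to show this richer ordering information remains useless, which is exactly what \textbf{Cond.~2*} is designed to deliver.

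First I would set up the induction as before: after $i<k$ rounds the history corresponds to some $\mathcal{C}_i\in\{a_1,r_1\}\times\cdots\times\{a_i,r_i\}$, with accepted set $X_i=\mathcal{C}_i\cap A$ and rejected set $Y_i=\mathcal{C}_i\cap R$, and each intermediate pair $X_j,Y_j$ ($j\le i$) is the accepted/rejected set of the prefix configuration $\mathcal{C}_j$. The central claim is that at the start of round $i+1$ every undecided item $u\in\mathcal{N}\setminus\mathcal{C}_i$ presents an identical $Q$-vector. To verify this I would invoke \textbf{Cond.~2*} at each prefix level $j\le i$: since $X_j=\mathcal{C}_j\cap A$, $Y_j=\mathcal{C}_j\cap R$ and any two remaining items satisfy $u,v\in\mathcal{N}\setminus\mathcal{C}_i\subseteq\mathcal{N}\setminus\mathcal{C}_j$, the condition yields $f(X_j\cup\{u\})=f(X_j\cup\{v\})$ and $\bar f(Y_j\cup\{u\})=\bar f(Y_j\cup\{v\})$, hence $\rho(u\mid X_j)=\rho(v\mid X_j)$ and $\bar\rho(u\mid Y_j)=\bar\rho(v\mid Y_j)$. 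Ranging over all $j\le i$ shows $u$ and $v$ agree on every coordinate of $Q$, so $\pi(Q(u))=\pi(Q(v))$ for whatever priority function the algorithm selects. All remaining items are therefore tied, and the adversary, who resolves ties, may deliver whichever item it pleases as $u_{i+1}$.

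With ordering freedom re-established, the forcing step is verbatim from Lemma~\ref{conditionslemma}: the algorithm cannot separate $a_{i+1}$ from $r_{i+1}$ (two of the tied, indistinguishable remaining items), so once it commits to accept or reject, the adversary labels the delivered item $a_{i+1}$ or $r_{i+1}$ accordingly and postpones the unused partner to a round after $k$. This advances the induction to $\mathcal{C}_{i+1}$, with base case $\mathcal{C}_0=\emptyset$ immediate. After $k$ rounds the history is some $\mathcal{C}_k$ and every completion $S$ of the computation obeys $\mathcal{C}_k\cap A\subseteq S$ and $S\cap(\mathcal{C}_k\cap R)=\emptyset$, so \textbf{Cond.~3} forces $f(S)\le1$, while \textbf{Cond.~4} gives $OPT\ge f(S^*)\ge c$; the ratio is thus at most $1/c$.

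I expect the one genuinely new point --- and the main obstacle --- to be the $Q$-vector computation in the second paragraph: unlike the fixed case, I must certify indistinguishability against \emph{all} Q-Type~2 queries simultaneously and against \emph{every} competing remaining item, not merely the designated pair $a_{i+1},r_{i+1}$. This is precisely why \textbf{Cond.~2*} quantifies over all pairs $u,v\in\mathcal{N}\setminus\mathcal{C}_i$, and why it subsumes \textbf{Cond.~1} (take $i=0$, giving $f(\{u\})=f(\{v\})$) and \textbf{Cond.~2} (take $u=a_j,v=r_j$ for $j>i$); the remainder of the argument is unchanged from the fixed-priority lemma.
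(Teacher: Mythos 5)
Your proposal is correct and follows essentially the same route as the paper's proof: use \textbf{Cond.~2*} at every prefix level $j \leq i$ to show that all undecided items present identical Q-Type~2 marginal vectors $\rho(\cdot\mid X_j), \bar\rho(\cdot\mid Y_j)$, hence identical priorities under any adaptive $\pi$, restoring the adversary's control of the ordering; the forcing and the final bound via \textbf{Cond.~3} and \textbf{Cond.~4} are then inherited verbatim from Lemma~\ref{conditionslemma}. Your explicit verification that $u,v \in \mathcal{N}\setminus\mathcal{C}_i \subseteq \mathcal{N}\setminus\mathcal{C}_j$ licenses the application of \textbf{Cond.~2*} at each prefix is a welcome bit of added detail over the paper's terser statement.
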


\begin{proof}
Here, the algorithm is permitted to reorder the input at each iteration. The adversary responds by imposing indistinguishably on \emph{all} unfixed items in the first $k$ rounds. Consider the ordering step at the start of iteration $i<k$. By {\bf Cond. 2*}, then $\rho(u | X_j) = \rho(v | X_j), \bar{\rho}(u | Y_j) = \bar{\rho}(v | Y_j), j = 0,...,i-1$ for all $ u,v \in \mathcal{N}\setminus (X_{i-1} \cup Y_{i-1})$. This captures the full description of all unfixed items obtainable through Q-Type 2 value oracle access. Furthermore, any other information in the internal memory is independent of the remaining items, and thus provides no additional power. Hence, all items are indistinguishable and must be assigned the same priority. The adversary can now choose to place either $a_i$ or $r_i$ as the $i^{th}$ item, depending on the algorithm's decision. The rest of the proof is now identical to that of Lemma~\ref{conditionslemma}.
\end{proof}

\begin{mycapequ}[!ht]
\caption{\uline{LP for Q-Type 2 Adaptive Priority}}
\setcounter{equation}{0}
{\bf Objective}
\begin{align*}
\max \quad
& f(\{o_1,...,o_ {\lceil \frac{n}{2} \rceil}\}) \label{objective}
\end{align*}
{\bf Constraints}
\begin{align*}
f(S \cup \{v\}) + f(S \cup \{u\}) - f(S \cup \{v,u\}) - f(S) &\geq 0
&& \forall S \subseteq \mathcal{N}, (v,u) \in \mathcal{N}\setminus S \\ \notag
f((\mathcal{C}_i \cap A)\cup \{u\}) -f((\mathcal{C}_i\cap A)\cup \{v\}) &= 0, \notag\\ \notag
\bar{f}((\mathcal{C}_i \cap R) \cup \{u\}) - \bar{f}((\mathcal{C}_i \cap R) \cup \{v\}) &= 0 
&& \forall \mathcal{C}_i \in \{a_1,r_1\}\times...\times\{a_i,r_i\}, \\ 
&& &0\leq i < k, \;(u,v) \in \mathcal{N}\setminus \mathcal{C}_i \\
f(S) &\leq 1 
&& \forall S \text{ s.t. } \exists \mathcal{C}_k, \mathcal{C}_k\cap A \subseteq S \notag \\
&&&\wedge S \cap \mathcal{C}_k \cap R = \emptyset \notag \\
f(S) &\geq 0 
&& \forall S \\
f(\emptyset) & = 0 \notag
\end{align*}
\end{mycapequ}

\begin{proof}[Proof of Theorem~\ref{adaptivetheorem}]
Running the modified LP again using $n=8$ and $k=4$ produces a feasible solution with objective value of $c=2.3158$, giving us an inapproximability of $\frac{1}{c} \approx 0.432$.

\end{proof}

\section{Discussion of the Double-Sided Myopic Model and Open Problems}
\label{sec:discussion}

Adapting the priority framework \cite{Borodin:2002:PA:545381.545481}, we define the class of double-sided myopic algorithms and show how the double greedy algorithm of Buchbinder \emph{et al.} can be realized as an online double-sided myopic algorithm. We show that the double-sided interpretation of the double greedy algorithm of \cite{Buchbinder} satisfies the deterministic online model of Bar-Noy and Lampis \cite{barnoy:lampis}, for which they prove an online inapproximability of $\frac{2}{3 \sqrt{3}}$ for the Max-Di-Cut problem. As in Poloczek's \cite{poloczek2011bounds} priority inapproximation for Max-Sat, this provides evidence that the randomized $\frac{1}{2}$-approximation double greedy for USM cannot be de-randomized.

Our inapproximation follows from an LP formulation of possible algorithmic decisions, and at present does not yield a succinctly defined problem. However, we provide a $\frac{2}{3}$-inapproximation for Max-Di-Cut for fixed priority double-sided myopic algorithms. We wish to emphasize the generality of the myopic framework as it allows very general orderings to be defined by the algorithm, and does not impose any greedy aspect to the decisions (as to reject or accept an input) in each iteration of the algorithm. We also observe that \emph{non-greediness} appears to be essential in both the randomized double-greedy algorithm of Buchbinder \emph{et al.} as well as the deterministic approximation of Bar-Noy and Lampis for Max-Di-Cut on DAGs; while the deterministic double-greedy does make greedy decisions. 

The gap between the $\frac{1}{3}$ deterministic double greedy and the .432 inapproximation of our myopic framework remains open as does the gap between the $\frac{1}{3}$ approximation and known 
myopic inapproximations for explicit functions such as Max-Di-Cut. 
It is not clear how much further we can extend the LP formalization to improve our bounds, or if we can derive such bounds for succinct functions. We do not know if there is any provable difference between the online, fixed priority, and adaptive priority myopic models. In particular, can we improve upon the $\frac{1}{3}$ approximation for Max-Di-Cut by using a fixed or adaptive myopic algorithm? 

In addition to the above questions for deterministic algorithms, the next obvious direction is to establish limitations for 
\emph{randomized} double-sided myopic algorithms. Randomization can be utlilized in the ordering step and/or the decision step of the algorithm. Of particular interest is randomization in the decision step as in the 
Buchbinder et al algorithm. While we know that the $\frac{1}{2}$ approximation
 is tight under the value oracle model and under standard complexity
assumptions, it is still of interest to show such an inapproximation 
for myopic algorithms without any complexity constraints. 

\section*{Acknowledgments}
The authors would like to thank Yuval Filmus for the idea of employing linear programming, and Matthias Poloczek and Charles Rackoff for their comments and suggestions.

\newpage

\bibliographystyle{siam}
\bibliography{myrefs}{}

\newpage

\begin{appendices}

\begin{subappendices}
\section{LP Solution for Q-Type 3 Fixed Priority Inapproximation}
\label{fixedappendix}
We include for completeness the numerical solution of our LP formulation from Section~\ref{fixed} in Table~\ref{tab:allvalues}. Define the ground set $\mathcal{N} = \{1,2,3,4,5,6,7,8\}$, with the optimal solution $OPT = \{5,6,7,8\}$ and $f(OPT)=2.2222$. We set $k=4$, with $A = \{1,2,3,4\}$ and $R = \{5,6,7,8\}$. Specifically, the adversary chooses $i_1 \in \{1,5\}$, $i_2 \in \{2,6\}$, $i_3 \in \{3,7\}$, and $i_4 \in \{4,8\}$ depending on the algorithm's decisions; that is, the adversary sets $i_1 = 1$ (\emph{resp.} $i_1 = 5$) if the algorithm accepts (\emph{resp.} rejects) the first item, and so on.
To verify that our LP generated function $f$ supports the adversarial construction described in Section~\ref{adaptive}, it suffices to verify \textbf{Cond. 1, 2$\dagger$, 3} and \textbf{4}. Figure~\ref{fig:allmarginalsfixed} demonstrates that \textbf{Cond. 2$\dagger$} is satisfied: for each $0 \leq i < 4$ we show equality in the Q-Type 3 marginals\footnote{We show the value $f(\{u\} \cup S) = \rho(u|S) + f(S)$ instead of $\rho(u|S)$ as it allows for direct comparison with Table~\ref{tab:allvalues}. Clearly, $\rho(v|S) = \rho(u|S) \Leftrightarrow f(\{u\}\cup S) = f(\{v\} \cup S)$.}  between $a_{i+1}$ and $r_{i+1}$ for all allowable partial solutions. \textbf{Cond. 3} can be verified by examining Table~\ref{tab:allvalues} and observing that if $f(S) > 1$, then the subset $S$ is disallowed by the adversary. Specifically, for such a subset $S$, there is some $i \in [1,4]$ such that $a_i \not\in S$ and $r_i \in S$; this is a contradiction since the adversary always forces the algorithm to either accept $a_i$ or reject $r_i$. \textbf{Cond. 4} follows by setting $c=2.2222$, corresponding to the optimal solution.
Finally, we checked for submodularity separately by brute force, by verifying that $f(S\cup T) + f(S \cap T) \leq f(S) + f(T)$ for all $S,T \subseteq \mathcal{N}$. 

\pgfkeysifdefined{/pgfplots/table/output empty row/.@cmd}{
    \pgfplotstableset{
        empty header/.style={
          every head row/.style={output empty row},
        }
    }
}{
    \pgfplotstableset{
        empty header/.style={
            typeset cell/.append code={%
                \ifnum\pgfplotstablerow=-1 %
                    \pgfkeyssetvalue{/pgfplots/table/@cell content}{}%
                \fi
            }
        }
    }
}
{\footnotesize
\pgfplotstabletypeset[
    empty header,
    begin table=\begin{longtable},
    every first row/.append style={before row={%
    \caption{Complete description of $f:2^\mathcal{N} \to \mathbb{R}^+$ used in Theorem~\ref{fixedtheoremtype3}}%
    \label{tab:allvalues}\\\toprule
   \textbf{$S$} &\textbf{$f(S)$} & \textbf{$S$} &\textbf{$f(S)$} & \textbf{$S$} &\textbf{$f(S)$} & \textbf{$S$} &\textbf{$f(S)$} \\ \toprule    
    \endfirsthead
    \multicolumn{4}{c}%
    {{ Table \thetable\ Continued from previous page}} \\
    \toprule 
    \textbf{$S$} &\textbf{$f(S)$} & \textbf{$S$} &\textbf{$f(S)$} & \textbf{$S$} &\textbf{$f(S)$} & \textbf{$S$} &\textbf{$f(S)$} \\ \toprule  
    \endhead
    \midrule \multicolumn{2}{r}{{Continued on next page}} \\ \bottomrule
    \endfoot
    \midrule
    \multicolumn{2}{r}{{Concluded}} \\ \bottomrule
    \endlastfoot
    }},%
    end table=\end{longtable},
    col sep=ampersand,
    string type,
	columns={S,f,S,f,S,f,S,f},
    display columns/0/.style={
        select equal part entry of={0}{4},
        string type,
        column name={$S$},
        column type={r}, 
    },
    display columns/1/.style={
        select equal part entry of={0}{4},
        string type,
        column name={$f(S)$},
        column type={l|},
    },
    display columns/2/.style={select equal part entry of={1}{4},string type, column name=$S$,column type={r}},
    display columns/3/.style={select equal part entry of={1}{4},string type, column name=$f(S)$, column type={l|}},
   display columns/4/.style={select equal part entry of={2}{4},string type, column name=$S$,column type={r}},
    display columns/5/.style={select equal part entry of={2}{4},string type, column name=$f(S)$,column type={l|}},
	  display columns/6/.style={select equal part entry of={3}{4},string type, column name=$S$,column type={r}},
    display columns/7/.style={select equal part entry of={3}{4},string type, column name=$f(S)$,column type={l}},
    ]{datafixed.csv}
}

\begin{figure}%
\caption{Certificate of \textbf{Cond. 2$\dagger$} for Q-Type 3 fixed priority inapproximation}
\label{fig:allmarginalsfixed}
\begin{subfigure}[b]{\textwidth}
  \centering
  \begin{subfigure}[b]{0.3\textwidth} 
{\tiny
\begin{tabular}{|l|} \hline \multicolumn{1}{|c|}{$X_1 : \{  \},   Y_1 : \{ 5 \}$}\\ \hline
$f( 5 2 )= 1.111111$ \\
$f( 5 6 )= 1.111111$ \\
$\bar f( 5 2 )= 0.8888889$ \\
$\bar f( 5 6 )= 0.8888889$ \\

\hline  \end{tabular}}
\end{subfigure}
\begin{subfigure}[b]{0.3\textwidth}
{\tiny
\begin{tabular}{|l|} \hline \multicolumn{1}{|c|}{$X_1 : \{ 1 \},   Y_1 : \{  \}$}\\ \hline
$f( 1 2 )= 0.8888889$ \\
$f( 1 6 )= 0.8888889$ \\
$\bar f( 1 2 )= 1.111111$ \\
$\bar f( 1 6 )= 1.111111$ \\
\hline  \end{tabular}}
\end{subfigure}
\caption{All allowable marginal descriptions for $a_2$ and $r_2$}
\end{subfigure}
\begin{subfigure}[b]{\textwidth}
  \centering
  \begin{subfigure}[b]{0.2\textwidth} 
{\tiny
\begin{tabular}{|l|} \hline \multicolumn{1}{|c|}{$X_2 : \{  \},   Y_2 : \{ 5 6 \}$}\\ \hline
$f( 6 3 )= 1.111111$ \\
$f( 6 7 )= 1.111111$ \\
$\bar f( 6 3 )= 0.8888889$ \\
$\bar f( 6 7 )= 0.8888889$ \\
$f( 5 3 )= 1.111111$ \\
$f( 5 7 )= 1.111111$ \\
$\bar f( 5 3 )= 0.8888889$ \\
$\bar f( 5 7 )= 0.8888889$ \\
$f( 5 6 3 )= 1.666667$ \\
$f( 5 6 7 )= 1.666667$ \\
$\bar f( 5 6 3 )= 1$ \\
$\bar f( 5 6 7 )= 1$ \\
\hline  \end{tabular}}
\end{subfigure}
\begin{subfigure}[b]{0.2\textwidth}
{\tiny
\begin{tabular}{|l|} \hline \multicolumn{1}{|c|}{$X_2 : \{ 2 \},   Y_2 : \{ 5 \}$}\\ \hline
$f( 5 3 )= 1.111111$ \\
$f( 5 7 )= 1.111111$ \\
$\bar f( 5 3 )= 0.8888889$ \\
$\bar f( 5 7 )= 0.8888889$ \\
$f( 2 3 )= 0.8888889$ \\
$f( 2 7 )= 0.8888889$ \\
$\bar f( 2 3 )= 1.111111$ \\
$\bar f( 2 7 )= 1.111111$ \\
$f( 2 5 3 )= 1.444444$ \\
$f( 2 5 7 )= 1.444444$ \\
$\bar f( 2 5 3 )= 1.222222$ \\
$\bar f( 2 5 7 )= 1.222222$ \\
\hline  \end{tabular}}
\end{subfigure}
\begin{subfigure}[b]{0.2\textwidth}
{\tiny
\begin{tabular}{|l|} \hline \multicolumn{1}{|c|}{$X_2 : \{ 1 \},   Y_2 : \{ 6 \}$}\\ \hline
$f( 6 3 )= 1.111111$ \\
$f( 6 7 )= 1.111111$ \\
$\bar f( 6 3 )= 0.8888889$ \\
$\bar f( 6 7 )= 0.8888889$ \\
$f( 1 3 )= 0.8888889$ \\
$f( 1 7 )= 0.8888889$ \\
$\bar f( 1 3 )= 1.111111$ \\
$\bar f( 1 7 )= 1.111111$ \\
$f( 1 6 3 )= 1.222222$ \\
$f( 1 6 7 )= 1.222222$ \\
$\bar f( 1 6 3 )= 1.333333$ \\
$\bar f( 1 6 7 )= 1.333333$ \\
\hline  \end{tabular}}
\end{subfigure}
\begin{subfigure}[b]{0.2\textwidth}
{\tiny
\begin{tabular}{|l|} \hline \multicolumn{1}{|c|}{$X_2 : \{ 1 2 \},   Y_2 : \{  \}$}\\ \hline
$f( 2 3 )= 0.8888889$ \\
$f( 2 7 )= 0.8888889$ \\
$\bar f( 2 3 )= 1.111111$ \\
$\bar f( 2 7 )= 1.111111$ \\
$f( 1 3 )= 0.8888889$ \\
$f( 1 7 )= 0.8888889$ \\
$\bar f( 1 3 )= 1.111111$ \\
$\bar f( 1 7 )= 1.111111$ \\
$f( 1 2 3 )= 1$ \\
$f( 1 2 7 )= 1$ \\
$\bar f( 1 2 3 )= 1.666667$ \\
$\bar f( 1 2 7 )= 1.666667$ \\
\hline  \end{tabular}}
\end{subfigure}
\caption{All allowable marginal descriptions for  $a_3$ and $r_3$}

\begin{subfigure}[b]{\textwidth}
  \centering
\begin{subfigure}[b]{0.2\textwidth}{\tiny  \begin{tabular}{|l|} \hline \multicolumn{1}{|c|}{$X_3 : \{  \},   Y_3 : \{ 5 6 7 \}$}\\ \hline
$f( 7 4 )= 1.111111$ \\
$f( 7 8 )= 1.111111$ \\
$\bar f( 7 4 )= 0.7222222$ \\
$\bar f( 7 8 )= 0.7222222$ \\
$f( 6 4 )= 1.111111$ \\
$f( 6 8 )= 1.111111$ \\
$\bar f( 6 4 )= 0.7777778$ \\
$\bar f( 6 8 )= 0.7777778$ \\
$f( 6 7 4 )= 1.666667$ \\
$f( 6 7 8 )= 1.666667$ \\
$\bar f( 6 7 4 )= 0.9444444$ \\
$\bar f( 6 7 8 )= 0.9444444$ \\
$f( 5 4 )= 1.111111$ \\
$f( 5 8 )= 1.111111$ \\
$\bar f( 5 4 )= 0.7777778$ \\
$\bar f( 5 8 )= 0.7777778$ \\
$f( 5 7 4 )= 1.666667$ \\
$f( 5 7 8 )= 1.666667$ \\
$\bar f( 5 7 4 )= 0.8888889$ \\
$\bar f( 5 7 8 )= 0.8888889$ \\
$f( 5 6 4 )= 1.666667$ \\
$f( 5 6 8 )= 1.666667$ \\
$\bar f( 5 6 4 )= 1$ \\
$\bar f( 5 6 8 )= 1$ \\
$f( 5 6 7 4 )= 2.222222$ \\
$f( 5 6 7 8 )= 2.222222$ \\
$\bar f( 5 6 7 4 )= 1$ \\
$\bar f( 5 6 7 8 )= 1$ \\

\hline  \end{tabular}} \end{subfigure} \begin{subfigure}[b]{0.2\textwidth}{\tiny  \begin{tabular}{|l|} \hline \multicolumn{1}{|c|}{$X_3 : \{ 3 \},   Y_3 : \{ 5 6 \}$}\\ \hline
$f( 6 4 )= 1.111111$ \\
$f( 6 8 )= 1.111111$ \\
$\bar f( 6 4 )= 0.7777778$ \\
$\bar f( 6 8 )= 0.7777778$ \\
$f( 5 4 )= 1.111111$ \\
$f( 5 8 )= 1.111111$ \\
$\bar f( 5 4 )= 0.7777778$ \\
$\bar f( 5 8 )= 0.7777778$ \\
$f( 5 6 4 )= 1.666667$ \\
$f( 5 6 8 )= 1.666667$ \\
$\bar f( 5 6 4 )= 1$ \\
$\bar f( 5 6 8 )= 1$ \\
$f( 3 4 )= 0.8888889$ \\
$f( 3 8 )= 0.8888889$ \\
$\bar f( 3 4 )= 1.111111$ \\
$\bar f( 3 8 )= 1.111111$ \\
$f( 3 6 4 )= 1.222222$ \\
$f( 3 6 8 )= 1.222222$ \\
$\bar f( 3 6 4 )= 1.111111$ \\
$\bar f( 3 6 8 )= 1.111111$ \\
$f( 3 5 4 )= 1.444444$ \\
$f( 3 5 8 )= 1.444444$ \\
$\bar f( 3 5 4 )= 1.111111$ \\
$\bar f( 3 5 8 )= 1.111111$ \\
$f( 3 5 6 4 )= 1.777778$ \\
$f( 3 5 6 8 )= 1.777778$ \\
$\bar f( 3 5 6 4 )= 1.111111$ \\
$\bar f( 3 5 6 8 )= 1.111111$ \\

\hline  \end{tabular}} \end{subfigure} \begin{subfigure}[b]{0.2\textwidth}{\tiny  \begin{tabular}{|l|} \hline \multicolumn{1}{|c|}{$X_3 : \{ 2 \},   Y_3 : \{ 5 7 \}$}\\ \hline
$f( 7 4 )= 1.111111$ \\
$f( 7 8 )= 1.111111$ \\
$\bar f( 7 4 )= 0.7222222$ \\
$\bar f( 7 8 )= 0.7222222$ \\
$f( 5 4 )= 1.111111$ \\
$f( 5 8 )= 1.111111$ \\
$\bar f( 5 4 )= 0.7777778$ \\
$\bar f( 5 8 )= 0.7777778$ \\
$f( 5 7 4 )= 1.666667$ \\
$f( 5 7 8 )= 1.666667$ \\
$\bar f( 5 7 4 )= 0.8888889$ \\
$\bar f( 5 7 8 )= 0.8888889$ \\
$f( 2 4 )= 0.7777778$ \\
$f( 2 8 )= 0.7777778$ \\
$\bar f( 2 4 )= 1.111111$ \\
$\bar f( 2 8 )= 1.111111$ \\
$f( 2 7 4 )= 1.111111$ \\
$f( 2 7 8 )= 1.111111$ \\
$\bar f( 2 7 4 )= 1.222222$ \\
$\bar f( 2 7 8 )= 1.222222$ \\
$f( 2 5 4 )= 1.333333$ \\
$f( 2 5 8 )= 1.333333$ \\
$\bar f( 2 5 4 )= 1.111111$ \\
$\bar f( 2 5 8 )= 1.111111$ \\
$f( 2 5 7 4 )= 1.555556$ \\
$f( 2 5 7 8 )= 1.555556$ \\
$\bar f( 2 5 7 4 )= 1.222222$ \\
$\bar f( 2 5 7 8 )= 1.222222$ \\

\hline  \end{tabular}} \end{subfigure} \begin{subfigure}[b]{0.2\textwidth}{\tiny  \begin{tabular}{|l|} \hline \multicolumn{1}{|c|}{$X_3 : \{ 2 3 \},   Y_3 : \{ 5 \}$}\\ \hline
$f( 5 4 )= 1.111111$ \\
$f( 5 8 )= 1.111111$ \\
$\bar f( 5 4 )= 0.7777778$ \\
$\bar f( 5 8 )= 0.7777778$ \\
$f( 3 4 )= 0.8888889$ \\
$f( 3 8 )= 0.8888889$ \\
$\bar f( 3 4 )= 1.111111$ \\
$\bar f( 3 8 )= 1.111111$ \\
$f( 3 5 4 )= 1.444444$ \\
$f( 3 5 8 )= 1.444444$ \\
$\bar f( 3 5 4 )= 1.111111$ \\
$\bar f( 3 5 8 )= 1.111111$ \\
$f( 2 4 )= 0.7777778$ \\
$f( 2 8 )= 0.7777778$ \\
$\bar f( 2 4 )= 1.111111$ \\
$\bar f( 2 8 )= 1.111111$ \\
$f( 2 5 4 )= 1.333333$ \\
$f( 2 5 8 )= 1.333333$ \\
$\bar f( 2 5 4 )= 1.111111$ \\
$\bar f( 2 5 8 )= 1.111111$ \\
$f( 2 3 4 )= 0.8888889$ \\
$f( 2 3 8 )= 0.8888889$ \\
$\bar f( 2 3 4 )= 1.666667$ \\
$\bar f( 2 3 8 )= 1.666667$ \\
$f( 2 3 5 4 )= 1.388889$ \\
$f( 2 3 5 8 )= 1.388889$ \\
$\bar f( 2 3 5 4 )= 1.444444$ \\
$\bar f( 2 3 5 8 )= 1.444444$ \\

\hline  \end{tabular}} \end{subfigure} \begin{subfigure}[b]{0.2\textwidth}{\tiny  \begin{tabular}{|l|} \hline \multicolumn{1}{|c|}{$X_3 : \{ 1 \},   Y_3 : \{ 6 7 \}$}\\ \hline
$f( 7 4 )= 1.111111$ \\
$f( 7 8 )= 1.111111$ \\
$\bar f( 7 4 )= 0.7222222$ \\
$\bar f( 7 8 )= 0.7222222$ \\
$f( 6 4 )= 1.111111$ \\
$f( 6 8 )= 1.111111$ \\
$\bar f( 6 4 )= 0.7777778$ \\
$\bar f( 6 8 )= 0.7777778$ \\
$f( 6 7 4 )= 1.666667$ \\
$f( 6 7 8 )= 1.666667$ \\
$\bar f( 6 7 4 )= 0.9444444$ \\
$\bar f( 6 7 8 )= 0.9444444$ \\
$f( 1 4 )= 0.7777778$ \\
$f( 1 8 )= 0.7777778$ \\
$\bar f( 1 4 )= 1.111111$ \\
$\bar f( 1 8 )= 1.111111$ \\
$f( 1 7 4 )= 1.111111$ \\
$f( 1 7 8 )= 1.111111$ \\
$\bar f( 1 7 4 )= 1.277778$ \\
$\bar f( 1 7 8 )= 1.277778$ \\
$f( 1 6 4 )= 1.111111$ \\
$f( 1 6 8 )= 1.111111$ \\
$\bar f( 1 6 4 )= 1.333333$ \\
$\bar f( 1 6 8 )= 1.333333$ \\
$f( 1 6 7 4 )= 1.444444$ \\
$f( 1 6 7 8 )= 1.444444$ \\
$\bar f( 1 6 7 4 )= 1.388889$ \\
$\bar f( 1 6 7 8 )= 1.388889$ \\

\hline  \end{tabular}} \end{subfigure} \begin{subfigure}[b]{0.2\textwidth}{\tiny  \begin{tabular}{|l|} \hline \multicolumn{1}{|c|}{$X_3 : \{ 1 3 \},   Y_3 : \{ 6 \}$}\\ \hline
$f( 6 4 )= 1.111111$ \\
$f( 6 8 )= 1.111111$ \\
$\bar f( 6 4 )= 0.7777778$ \\
$\bar f( 6 8 )= 0.7777778$ \\
$f( 3 4 )= 0.8888889$ \\
$f( 3 8 )= 0.8888889$ \\
$\bar f( 3 4 )= 1.111111$ \\
$\bar f( 3 8 )= 1.111111$ \\
$f( 3 6 4 )= 1.222222$ \\
$f( 3 6 8 )= 1.222222$ \\
$\bar f( 3 6 4 )= 1.111111$ \\
$\bar f( 3 6 8 )= 1.111111$ \\
$f( 1 4 )= 0.7777778$ \\
$f( 1 8 )= 0.7777778$ \\
$\bar f( 1 4 )= 1.111111$ \\
$\bar f( 1 8 )= 1.111111$ \\
$f( 1 6 4 )= 1.111111$ \\
$f( 1 6 8 )= 1.111111$ \\
$\bar f( 1 6 4 )= 1.333333$ \\
$\bar f( 1 6 8 )= 1.333333$ \\
$f( 1 3 4 )= 1$ \\
$f( 1 3 8 )= 1$ \\
$\bar f( 1 3 4 )= 1.666667$ \\
$\bar f( 1 3 8 )= 1.666667$ \\
$f( 1 3 6 4 )= 1.222222$ \\
$f( 1 3 6 8 )= 1.222222$ \\
$\bar f( 1 3 6 4 )= 1.555556$ \\
$\bar f( 1 3 6 8 )= 1.555556$ \\

\hline  \end{tabular}} \end{subfigure} \begin{subfigure}[b]{0.2\textwidth}{\tiny  \begin{tabular}{|l|} \hline \multicolumn{1}{|c|}{$X_3 : \{ 1 2 \},   Y_3 : \{ 7 \}$}\\ \hline
$f( 7 4 )= 1.111111$ \\
$f( 7 8 )= 1.111111$ \\
$\bar f( 7 4 )= 0.7222222$ \\
$\bar f( 7 8 )= 0.7222222$ \\
$f( 2 4 )= 0.7777778$ \\
$f( 2 8 )= 0.7777778$ \\
$\bar f( 2 4 )= 1.111111$ \\
$\bar f( 2 8 )= 1.111111$ \\
$f( 2 7 4 )= 1.111111$ \\
$f( 2 7 8 )= 1.111111$ \\
$\bar f( 2 7 4 )= 1.222222$ \\
$\bar f( 2 7 8 )= 1.222222$ \\
$f( 1 4 )= 0.7777778$ \\
$f( 1 8 )= 0.7777778$ \\
$\bar f( 1 4 )= 1.111111$ \\
$\bar f( 1 8 )= 1.111111$ \\
$f( 1 7 4 )= 1.111111$ \\
$f( 1 7 8 )= 1.111111$ \\
$\bar f( 1 7 4 )= 1.277778$ \\
$\bar f( 1 7 8 )= 1.277778$ \\
$f( 1 2 4 )= 1$ \\
$f( 1 2 8 )= 1$ \\
$\bar f( 1 2 4 )= 1.666667$ \\
$\bar f( 1 2 8 )= 1.666667$ \\
$f( 1 2 7 4 )= 1.111111$ \\
$f( 1 2 7 8 )= 1.111111$ \\
$\bar f( 1 2 7 4 )= 1.777778$ \\
$\bar f( 1 2 7 8 )= 1.777778$ \\

\hline  \end{tabular}} \end{subfigure} \begin{subfigure}[b]{0.2\textwidth}{\tiny  \begin{tabular}{|l|} \hline \multicolumn{1}{|c|}{$X_3 : \{ 1 2 3 \},   Y_3 : \{  \}$}\\ \hline
$f( 3 4 )= 0.8888889$ \\
$f( 3 8 )= 0.8888889$ \\
$\bar f( 3 4 )= 1.111111$ \\
$\bar f( 3 8 )= 1.111111$ \\
$f( 2 4 )= 0.7777778$ \\
$f( 2 8 )= 0.7777778$ \\
$\bar f( 2 4 )= 1.111111$ \\
$\bar f( 2 8 )= 1.111111$ \\
$f( 2 3 4 )= 0.8888889$ \\
$f( 2 3 8 )= 0.8888889$ \\
$\bar f( 2 3 4 )= 1.666667$ \\
$\bar f( 2 3 8 )= 1.666667$ \\
$f( 1 4 )= 0.7777778$ \\
$f( 1 8 )= 0.7777778$ \\
$\bar f( 1 4 )= 1.111111$ \\
$\bar f( 1 8 )= 1.111111$ \\
$f( 1 3 4 )= 1$ \\
$f( 1 3 8 )= 1$ \\
$\bar f( 1 3 4 )= 1.666667$ \\
$\bar f( 1 3 8 )= 1.666667$ \\
$f( 1 2 4 )= 1$ \\
$f( 1 2 8 )= 1$ \\
$\bar f( 1 2 4 )= 1.666667$ \\
$\bar f( 1 2 8 )= 1.666667$ \\
$f( 1 2 3 4 )= 1$ \\
$f( 1 2 3 8 )= 1$ \\
$\bar f( 1 2 3 4 )= 2.222222$ \\
$\bar f( 1 2 3 8 )= 2.222222$ \\
\hline  \end{tabular}} 
\end{subfigure}
\end{subfigure}
\caption{All allowable marginal descriptions for  $a_4$ and $r_4$}
\end{subfigure}
\end{figure}

\section{LP Solution for Q-Type 2 Adaptive Priority Inapproximation}
Similarly, we include the numerical solution of our LP formulation for the adaptive priority case. We employ identical notations as that of Appendix~\ref{fixedappendix}. In this case, indistinguishability is required between \emph{all} unfixed items with respect to Q-Type 2 (as in \textbf{Cond. 2*}). We show this in Table~\ref{tab:allmarginals}, in which we enumerate all configurations in the first 4 steps, and for each configuration show equality in the marginal differences of all remaining items. The verification of the remaining conditions follows the description in Appendix~\ref{fixedappendix}. In particular, observe that $f(OPT) = f(\{5,6,7,8\}) = 2.3158$ gives us the claimed bound in Theorme~\ref{adaptivetheorem}.

\pgfkeysifdefined{/pgfplots/table/output empty row/.@cmd}{
    \pgfplotstableset{
        empty header/.style={
          every head row/.style={output empty row},
        }
    }
}{
    \pgfplotstableset{
        empty header/.style={
            typeset cell/.append code={%
                \ifnum\pgfplotstablerow=-1 %
                    \pgfkeyssetvalue{/pgfplots/table/@cell content}{}%
                \fi
            }
        }
    }
}
{\footnotesize
\pgfplotstabletypeset[
    empty header,
    begin table=\begin{longtable},
    every first row/.append style={before row={%
    \caption{Complete description of $f:2^\mathcal{N} \to \mathbb{R}^+$ used in Theorem~\ref{adaptivetheorem}}%
    \label{tab:allvalues}\\\toprule
   \textbf{$S$} &\textbf{$f(S)$} & \textbf{$S$} &\textbf{$f(S)$} & \textbf{$S$} &\textbf{$f(S)$} & \textbf{$S$} &\textbf{$f(S)$} \\ \toprule    
    \endfirsthead
    \multicolumn{4}{c}%
    {{ Table \thetable\ Continued from previous page}} \\
    \toprule 
    \textbf{$S$} &\textbf{$f(S)$} & \textbf{$S$} &\textbf{$f(S)$} & \textbf{$S$} &\textbf{$f(S)$} & \textbf{$S$} &\textbf{$f(S)$} \\ \toprule  
    \endhead
    \midrule \multicolumn{2}{r}{{Continued on next page}} \\ \bottomrule
    \endfoot
    \midrule
    \multicolumn{2}{r}{{Concluded}} \\ \bottomrule
    \endlastfoot
    }},%
    end table=\end{longtable},
    col sep=ampersand,
    string type,
	columns={S,f,S,f,S,f,S,f},
    display columns/0/.style={
        select equal part entry of={0}{4},
        string type,
        column name={$S$},
        column type={r}, 
    },
    display columns/1/.style={
        select equal part entry of={0}{4},
        string type,
        column name={$f(S)$},
        column type={l|},
    },
    display columns/2/.style={select equal part entry of={1}{4},string type, column name=$S$,column type={r}},
    display columns/3/.style={select equal part entry of={1}{4},string type, column name=$f(S)$, column type={l|}},
   display columns/4/.style={select equal part entry of={2}{4},string type, column name=$S$,column type={r}},
    display columns/5/.style={select equal part entry of={2}{4},string type, column name=$f(S)$,column type={l|}},
	  display columns/6/.style={select equal part entry of={3}{4},string type, column name=$S$,column type={r}},
    display columns/7/.style={select equal part entry of={3}{4},string type, column name=$f(S)$,column type={l}},
    ]{data.csv}
}

\begin{table}[!htb]
    \caption{Indistinguishability between all unfixed items in the first 3 iterations with respect to Q-Type 2.} \label{tab:allmarginals}
	{\tiny

\begin{subtable}{.2\linewidth} \centering \begin{tabular}{|l|} \hline
\multicolumn{1}{|c|}{$X_0 = \{  \},   Y_0 = \{  \}$} \\ \hline
$f( 1 )= 0.5789474 $\\
$\bar f( 1 )= 0.5789474 $\\
$f( 2 )= 0.5789474 $\\
$\bar f( 2 )= 0.5789474 $\\
$f( 3 )= 0.5789474 $\\
$\bar f( 3 )= 0.5789474 $\\
$f( 4 )= 0.5789474 $\\
$\bar f( 4 )= 0.5789474 $\\
$f( 5 )= 0.5789474 $\\
$\bar f( 5 )= 0.5789474 $\\
$f( 6 )= 0.5789474 $\\
$\bar f( 6 )= 0.5789474 $\\
$f( 7 )= 0.5789474 $\\
$\bar f( 7 )= 0.5789474 $\\
$f( 8 )= 0.5789474 $\\
$\bar f( 8 )= 0.5789474 $\\
\hline  \end{tabular} \end{subtable}
\begin{subtable}{.2\linewidth} \centering \begin{tabular}{|l|} \hline
\multicolumn{1}{|c|}{$X_1 = \{ 1 \},   Y_1 = \{  \}$} \\ \hline
$f( 1 2 )= 0.8947368 $\\
$\bar f( 2 )= 0.5789474 $\\
$f( 1 3 )= 0.8947368 $\\
$\bar f( 3 )= 0.5789474 $\\
$f( 1 4 )= 0.8947368 $\\
$\bar f( 4 )= 0.5789474 $\\
$f( 1 5 )= 0.8947368 $\\
$\bar f( 5 )= 0.5789474 $\\
$f( 1 6 )= 0.8947368 $\\
$\bar f( 6 )= 0.5789474 $\\
$f( 1 7 )= 0.8947368 $\\
$\bar f( 7 )= 0.5789474 $\\
$f( 1 8 )= 0.8947368 $\\
$\bar f( 8 )= 0.5789474 $\\
\hline  \end{tabular} \end{subtable}
\begin{subtable}{.2\linewidth} \centering \begin{tabular}{|l|} \hline
\multicolumn{1}{|c|}{$X_1 = \{ 1,2 \},   Y_1 = \{  \}$} \\ \hline
$f( 1 2 3 )= 1 $\\
$\bar f( 3 )= 0.5789474 $\\
$f( 1 2 4 )= 1 $\\
$\bar f( 4 )= 0.5789474 $\\
$f( 1 2 5 )= 1 $\\
$\bar f( 5 )= 0.5789474 $\\
$f( 1 2 6 )= 1 $\\
$\bar f( 6 )= 0.5789474 $\\
$f( 1 2 7 )= 1 $\\
$\bar f( 7 )= 0.5789474 $\\
$f( 1 2 8 )= 1 $\\
$\bar f( 8 )= 0.5789474 $\\
\hline  \end{tabular} \end{subtable}
\begin{subtable}{.2\linewidth} \centering \begin{tabular}{|l|} \hline
\multicolumn{1}{|c|}{$X_2 = \{ 1,2,3 \},   Y_2 = \{  \}$} \\ \hline
$f( 1 2 3 4 )= 1 $\\
$\bar f( 4 )= 0.5789474 $\\
$f( 1 2 3 5 )= 1 $\\
$\bar f( 5 )= 0.5789474 $\\
$f( 1 2 3 6 )= 1 $\\
$\bar f( 6 )= 0.5789474 $\\
$f( 1 2 3 7 )= 1 $\\
$\bar f( 7 )= 0.5789474 $\\
$f( 1 2 3 8 )= 1 $\\
$\bar f( 8 )= 0.5789474 $\\
\hline  \end{tabular} \end{subtable}
 \begin{subtable}{.2\linewidth} \centering \begin{tabular}{|l|} \hline
\multicolumn{1}{|c|}{$X_2 = \{ 1,2 \},   Y_2 = \{ 7 \}$} \\ \hline
$f( 1 2 3 )= 1 $\\
$\bar f( 7 3 )= 0.8947368 $\\
$f( 1 2 4 )= 1 $\\
$\bar f( 7 4 )= 0.8947368 $\\
$f( 1 2 5 )= 1 $\\
$\bar f( 7 5 )= 0.8947368 $\\
$f( 1 2 6 )= 1 $\\
$\bar f( 7 6 )= 0.8947368 $\\
$f( 1 2 8 )= 1 $\\
$\bar f( 7 8 )= 0.8947368 $\\
\hline  \end{tabular} \end{subtable}
\begin{subtable}{.2\linewidth} \centering \begin{tabular}{|l|} \hline
\multicolumn{1}{|c|}{$X_2 = \{ 1 \},   Y_2 = \{ 6 \}$} \\ \hline
$f( 1 2 )= 0.8947368 $\\
$\bar f( 6 2 )= 0.8947368 $\\
$f( 1 3 )= 0.8947368 $\\
$\bar f( 6 3 )= 0.8947368 $\\
$f( 1 4 )= 0.8947368 $\\
$\bar f( 6 4 )= 0.8947368 $\\
$f( 1 5 )= 0.8947368 $\\
$\bar f( 6 5 )= 0.8947368 $\\
$f( 1 7 )= 0.8947368 $\\
$\bar f( 6 7 )= 0.8947368 $\\
$f( 1 8 )= 0.8947368 $\\
$\bar f( 6 8 )= 0.8947368 $\\
\hline  \end{tabular} \end{subtable}
\begin{subtable}{.2\linewidth} \centering \begin{tabular}{|l|} \hline
\multicolumn{1}{|c|}{$X_3 = \{ 1, 3 \},   Y_3 = \{ 6 \}$} \\ \hline
$f( 1 3 2 )= 1 $\\
$\bar f( 6 2 )= 0.8947368 $\\
$f( 1 3 4 )= 1 $\\
$\bar f( 6 4 )= 0.8947368 $\\
$f( 1 3 5 )= 1 $\\
$\bar f( 6 5 )= 0.8947368 $\\
$f( 1 3 7 )= 1 $\\
$\bar f( 6 7 )= 0.8947368 $\\
$f( 1 3 8 )= 1 $\\
$\bar f( 6 8 )= 0.8947368 $\\
\hline  \end{tabular} \end{subtable}
\begin{subtable}{.2\linewidth} \centering \begin{tabular}{|l|} \hline
\multicolumn{1}{|c|}{$X_3 = \{ 1 \},   Y_3 = \{ 6,7 \}$} \\ \hline
$f( 1 2 )= 0.8947368 $\\
$\bar f( 6 7 2 )= 1 $\\
$f( 1 3 )= 0.8947368 $\\
$\bar f( 6 7 3 )= 1 $\\
$f( 1 4 )= 0.8947368 $\\
$\bar f( 6 7 4 )= 1 $\\
$f( 1 5 )= 0.8947368 $\\
$\bar f( 6 7 5 )= 1 $\\
$f( 1 8 )= 0.8947368 $\\
$\bar f( 6 7 8 )= 1 $\\
\hline  \end{tabular} \end{subtable} 
\begin{subtable}{.2\linewidth} \centering \begin{tabular}{|l|} \hline
\multicolumn{1}{|c|}{$X_3 = \{ \},   Y_3 = \{ 5 \}$} \\ \hline
$f( 1 )= 0.5789474 $\\
$\bar f( 5 1 )= 0.8947368 $\\
$f( 2 )= 0.5789474 $\\
$\bar f( 5 2 )= 0.8947368 $\\
$f( 3 )= 0.5789474 $\\
$\bar f( 5 3 )= 0.8947368 $\\
$f( 4 )= 0.5789474 $\\
$\bar f( 5 4 )= 0.8947368 $\\
$f( 6 )= 0.5789474 $\\
$\bar f( 5 6 )= 0.8947368 $\\
$f( 7 )= 0.5789474 $\\
$\bar f( 5 7 )= 0.8947368 $\\
$f( 8 )= 0.5789474 $\\
$\bar f( 5 8 )= 0.8947368 $\\
\hline  \end{tabular} \end{subtable}
\begin{subtable}{.2\linewidth} \centering \begin{tabular}{|l|} \hline
\multicolumn{1}{|c|}{$X_3 = \{ 2 \},   Y_3 = \{ 5 \}$} \\ \hline
$f( 2 1 )= 0.8947368 $\\
$\bar f( 5 1 )= 0.8947368 $\\
$f( 2 3 )= 0.8947368 $\\
$\bar f( 5 3 )= 0.8947368 $\\
$f( 2 4 )= 0.8947368 $\\
$\bar f( 5 4 )= 0.8947368 $\\
$f( 2 6 )= 0.8947368 $\\
$\bar f( 5 6 )= 0.8947368 $\\
$f( 2 7 )= 0.8947368 $\\
$\bar f( 5 7 )= 0.8947368 $\\
$f( 2 8 )= 0.8947368 $\\
$\bar f( 5 8 )= 0.8947368 $\\
\hline  \end{tabular} \end{subtable} 
\begin{subtable}{.2\linewidth} \centering \begin{tabular}{|l|} \hline
\multicolumn{1}{|c|}{$X_3 = \{ 2,3\},   Y_3 = \{ 5 \}$} \\ \hline
$f( 2 3 1 )= 1 $\\
$\bar f( 5 1 )= 0.8947368 $\\
$f( 2 3 4 )= 1 $\\
$\bar f( 5 4 )= 0.8947368 $\\
$f( 2 3 6 )= 1 $\\
$\bar f( 5 6 )= 0.8947368 $\\
$f( 2 3 7 )= 1 $\\
$\bar f( 5 7 )= 0.8947368 $\\
$f( 2 3 8 )= 1 $\\
$\bar f( 5 8 )= 0.8947368 $\\
\hline  \end{tabular} \end{subtable}  \begin{subtable}{.2\linewidth} \centering \begin{tabular}{|l|} \hline
\multicolumn{1}{|c|}{$X_3 = \{ 2 \},   Y_3 = \{ 5,7 \}$} \\ \hline
$f( 2 1 )= 0.8947368 $\\
$\bar f( 5 7 1 )= 1 $\\
$f( 2 3 )= 0.8947368 $\\
$\bar f( 5 7 3 )= 1 $\\
$f( 2 4 )= 0.8947368 $\\
$\bar f( 5 7 4 )= 1 $\\
$f( 2 6 )= 0.8947368 $\\
$\bar f( 5 7 6 )= 1 $\\
$f( 2 8 )= 0.8947368 $\\
$\bar f( 5 7 8 )= 1 $\\
\hline  \end{tabular} \end{subtable} 
\begin{subtable}{.2\linewidth} \centering \begin{tabular}{|l|} \hline
\multicolumn{1}{|c|}{$X_3 = \{ \},   Y_3 = \{ 5,6 \}$} \\ \hline
$f( 1 )= 0.5789474 $\\
$\bar f( 5 6 1 )= 1 $\\
$f( 2 )= 0.5789474 $\\
$\bar f( 5 6 2 )= 1 $\\
$f( 3 )= 0.5789474 $\\
$\bar f( 5 6 3 )= 1 $\\
$f( 4 )= 0.5789474 $\\
$\bar f( 5 6 4 )= 1 $\\
$f( 7 )= 0.5789474 $\\
$\bar f( 5 6 7 )= 1 $\\
$f( 8 )= 0.5789474 $\\
$\bar f( 5 6 8 )= 1 $\\
\hline  \end{tabular} \end{subtable}  \begin{subtable}{.32\linewidth} \centering \begin{tabular}{|l|} \hline
\multicolumn{1}{|c|}{$X_3 = \{ 3 \},   Y_3 = \{6,7\}$} \\ \hline
$f( 3 1 )= 0.8947368 $\\
$\bar f( 5 6 1 )= 1 $\\
$f( 3 2 )= 0.8947368 $\\
$\bar f( 5 6 2 )= 1 $\\
$f( 3 4 )= 0.8947368 $\\
$\bar f( 5 6 4 )= 1 $\\
$f( 3 7 )= 0.8947368 $\\
$\bar f( 5 6 7 )= 1 $\\
$f( 3 8 )= 0.8947368 $\\
$\bar f( 5 6 8 )= 1 $\\
\hline  \end{tabular} \end{subtable}
\begin{subtable}{.2\linewidth} \centering \begin{tabular}{|l|} \hline
\multicolumn{1}{|c|}{$X_3 = \{ \},   Y_3 = \{ 5,6,7 \}$} \\ \hline
$f( 1 )= 0.5789474 $\\
$\bar f( 5 6 7 1 )= 1 $\\
$f( 2 )= 0.5789474 $\\
$\bar f( 5 6 7 2 )= 1 $\\
$f( 3 )= 0.5789474 $\\
$\bar f( 5 6 7 3 )= 1 $\\
$f( 4 )= 0.5789474 $\\
$\bar f( 5 6 7 4 )= 1 $\\
$f( 8 )= 0.5789474 $\\
$\bar f( 5 6 7 8 )= 1 $\\
\hline  \end{tabular} \end{subtable} 

	}
\end{table}

\end{subappendices}
\end{appendices}

\end{document}